\newtheorem{thm}{Theorem}
\newtheorem{lem}[thm]{Lemma}
\newtheorem{cor}[thm]{Corollary}
\newtheorem{example}{Example}
\newcommand{\mK}{\mathcal{K}}
\newcommand{\innerprod}[2]{  \left \langle #1 , #2  \right \rangle }
\newcommand{\dmk}{\mathrm{dist}(\mathcal{G}_m^{(K)},\neg\mathcal{G}_m^{(K)})}  
\newcommand{\udmk}{\underline{\mathrm{dist}}(\mathcal{G}_m^{(K)},\neg\mathcal{G}_m^{(K)})}
\newcommand{\WNR}{\text{WNR}}
\begin{document}

\title{Fingerprinting with Equiangular Tight Frames}
%
%
%

\author{Dustin~G.~Mixon,
        Christopher~J.~Quinn,~\IEEEmembership{Student~Member,~IEEE,}
        Negar~Kiyavash,~\IEEEmembership{Member,~IEEE,}
        and~Matthew~Fickus,~\IEEEmembership{Member,~IEEE}
\thanks{D.~G.~Mixon is with the Program in Applied and Computational Mathematics, Princeton University, Princeton, New Jersey 08544 USA (e-mail: dmixon@princeton.edu).}
\thanks{C.~J.~Quinn is with the Department of Electrical and Computer Engineering, University of Illinois, Urbana, Illinois 61801 USA (e-mail: quinn7@illinois.edu).}
\thanks{N.~Kiyavash is with the Department of Industrial and Enterprise Systems Engineering, University of Illinois, Urbana, Illinois 61801 USA (e-mail: kiyavash@illinois.edu).}
\thanks{M.~Fickus is with the Department of Mathematics and Statistics, Air Force Institute of Technology, Wright-Patterson AFB, Ohio 45433 USA (e-mail: matthew.fickus@afit.edu).}
\thanks{This work was presented in part at ICASSP 2011 and SPIE 2011. The authors thank the Air Force Summer Faculty Fellowship Program for making this collaboration possible.  This work was supported by NSF DMS 1042701, AFOSR F1ATA00083G004, AFOSR F1ATA00183G003, AFOSR FA9550-10-1-0345 and NRL N00173-09-1-G033.  Mixon was supported by the A.~B.~Krongard Fellowship.  Quinn was supported by the Department of Energy Computational Science Graduate Fellowship, which is provided under grant number DE-FG02-97ER25308.  The views expressed in this article are those of the authors and do not reflect the official policy or position of the United States Air Force, Department of Defense, or the U.S. Government.}
}

\maketitle



\begin{abstract}

Digital fingerprinting is a framework for marking media files, such as images, music, or movies, with user-specific signatures to deter illegal distribution.  
Multiple users can collude to produce a forgery that can potentially overcome a fingerprinting system.  
This paper proposes an equiangular tight frame fingerprint design which is robust to such collusion attacks.
We motivate this design by considering digital fingerprinting in terms of compressed sensing.  
The attack is modeled as linear averaging of multiple marked copies before adding a Gaussian noise vector.  
The content owner can then determine guilt by exploiting correlation between each user's fingerprint and the forged copy.
The worst-case error probability of this detection scheme is analyzed and bounded.
Simulation results demonstrate the average-case performance is similar to the performance of orthogonal and simplex fingerprint designs, while accommodating several times as many users.
\end{abstract}
 
\begin{IEEEkeywords}
digital fingerprinting, collusion attacks, frames.
\end{IEEEkeywords}

%
\IEEEpeerreviewmaketitle

\section{Introduction}
%
%
%
%

\IEEEPARstart{D}{igital} media protection has become an important issue in recent years, as illegal distribution of licensed material has become increasingly prevalent.  
A number of methods have been proposed to restrict illegal distribution of media and ensure only licensed users are able to access it.  
One method involves cryptographic techniques, which encrypt the media before distribution.  
By doing this, only the users with appropriate licensed hardware or software have access; satellite TV and DVDs are two such examples.  
Unfortunately, cryptographic approaches are limited in that once the content is decrypted (legally or illegally), it can potentially be copied and distributed freely.

An alternate approach involves marking each copy of the media with a unique signature.  
The signature could be a change in the bit sequence of the digital file or some noise-like distortion of the media.  
The unique signatures are called \emph{fingerprints}, by analogy to the uniqueness of human fingerprints.  
With this approach, a licensed user could illegally distribute the file, only to be implicated by his fingerprint.
The potential for prosecution acts as a deterrent to unauthorized distribution.  

Fingerprinting can be an effective technique for inhibiting individual licensed users from distributing their copies of the media.  
However, fingerprinting systems are vulnerable when multiple users form a \emph{collusion} by combining their copies to create a forged copy.  
This attack can reduce and distort the colluders' individual fingerprints, making identification of any particular user difficult.  
Some examples of potential attacks involve comparing the bit sequences of different copies, averaging copies in the signal space, as well as introducing distortion (such as noise, rotation, or cropping).  

There are two principal approaches to designing fingerprints with robustness to collusions.
The first approach uses the \emph{marking} assumption~\cite{boneh1998collusion}: that the forged copy only differs from the colluders' copies where the colluders' copies are different (typically in the bit sequence).  
In many cases, this is a reasonable assumption because modifying other bits would likely render the file unusable (such as with software).  

Boneh and Shaw proposed the first known fingerprint design that uses the marking assumption to identify a member of the collusion with high probability~\cite{boneh1998collusion}. 
Boneh and Shaw's method incorporates the results of Chor et al., who investigated how to detect users who illegally share keys for encrypted material~\cite{chor2000tracing}.  
Schaathun later showed that the Boneh-Shaw scheme is more efficient than initially thought and proposed further improvements~\cite{schaathun2006boneh}.  
Tardos also proposed a method with significantly shorter codelengths than those of the Boneh-Shaw procedure~\cite{tardos2008optimal}.
Several recent works investigate the relationship between the fingerprinting problem and multiple access channels, and they calculate the capacity of a ``fingerprint channel''~\cite{somekh2005capacity, somekh2007achievable, anthapadmanabhan2008fingerprinting, moulin2008universal}. 
Barg and Kabatiansky also developed ``parent-identifying'' codes under a relaxation of the marking assumption~\cite{barg2003digital}, and there have been a number of other works developing special classes of binary fingerprinting codes, including~\cite{lin2009fingerprinting , cheng2011anti, cotrina2010family, boneh2010robust,koga2011digital,trappe2003anti}.

The second major approach uses the \emph{distortion} assumption.  
In this regime, fingerprints are noise-like distortions to the media in signal space.  
In order to preserve the overall quality of the media, limits are placed on the magnitude of this distortion. The content owner limits the power of the fingerprint he adds, and the collusion limits the power of the noise they add in their attack.  
When applying the distortion assumption, the literature typically assumes that the collusion linearly averages their individual copies to forge the host signal.
Also, while results in this domain tend to accommodate fewer users than those with the marking assumption, the distortion assumption enables a more natural embedding of the fingerprints, i.e., in the signal space.

Cox et al.~introduced one of the first robust fingerprint designs under the distortion assumption~\cite{cox1997secure}; the robustness was later analytically proven in~\cite{kilian1998resistance}.  
Ergun et al.~then showed that for any fingerprinting system, there is a tradeoff between the probabilities of successful detection and false positives imposed by a linear-average-plus-noise attack from sufficiently large collusions~\cite{ergun1999note}.  
Specific fingerprint designs were later studied, including orthogonal fingerprints~\cite{wang2005anti} and simplex fingerprints~\cite{kiyavash2009regular}.  
There are also some proposed methods motivated by CDMA techniques~\cite{hayashi2007collusion, li2005collusion}.  
Jourdas and Moulin demonstrated that a high rate can be achieved by embedding randomly concatenated, independent short codes~\cite{jourdas2009high}.
Kiyavash and Moulin derived a lower bound on the worst-case error probabilities for designs with equal-energy fingerprints~\cite{kiyavash2007sphere}.  
An error probability analysis of general fingerprint designs with unequal priors on user collusion is given in~\cite{dalkilic2010detection}.  
Some works have also investigated fingerprint performance under attack strategies other than linear averaging~\cite{wang2007capacity, ling2011novel} and under alternative noise models~\cite{kiyavash2009performance}.  

When working with distortion-type fingerprints, some embedding process is typically needed.  
This process is intended to make it difficult for the colluders to identify or distort fingerprints without significantly damaging the corresponding media in the process.  
For instance, if the identity basis is used as an orthogonal fingerprint design, then three or more colluders can easily identify and remove their fingerprints through comparisons.  
Indeed, different signal dimensions have different perceptual significance to the human user, and one should vary signal power strengths accordingly.  
There is a large body of work in this area known as \emph{watermarking}, and this literature is relevant since fingerprinting assigns a distinct watermark to each user.  
As an example, one method of fingerprint embedding is known as spread spectrum watermarking.  
Inspired by spread spectrum communications~\cite{pickholtz1982theory}, this technique rotates the fingerprint basis to be distributed across the perceptually significant dimensions of the signal~\cite{cox1997secure}.  
This makes fingerprint removal difficult while at the same time maintaining acceptable fidelity.  
For an overview of watermarking media, see~\cite{cox2002digital, hartung1999multimedia}.

The present paper proposes a fingerprint design under the distortion assumption. Specifically, we propose equiangular tight frames for fingerprint design and analyze their performance for the worst-case collusion. Moreover, through simulations, we show that these fingerprints perform comparably to orthogonal and simplex fingerprints on average, while accommodating several times as many users.  Li and Trappe \cite{li2005collusion} also used fingerprints satisfying the Welch bound, but did not use tight frames, designed the decoder to return the whole collusion, and did not perform worst case analysis.  Use of compressed sensing for fingerprint design was first suggested in  \cite{dai2008spherical} and \cite{varodayan2008collusion}, follow by \cite{pham2010compressive}.  However, \cite{pham2010compressive} and \cite{varodayan2008collusion} focused on detection schemes with Gaussian fingerprints.  Additionally, \cite{colbourn2010frameproof} examined combinatorial similarities of fingerprint codes and compressed sensing measurement matrices.  

We present the description of the fingerprinting problem in Section~\ref{sec:problem_setup}.
In Section~\ref{sec:ETF_fing_des}, we discuss this problem from a compressed sensing viewpoint and introduce the equiangular tight frame fingerprint design.
Using this design, we consider a detector which determines guilt for each user through binary hypothesis testing, using the correlation between the forged copy and the user's fingerprint as a test statistic.  
In Section~\ref{sec:error_analysis}, we derive bounds on the worst-case error probability for this detection scheme assuming a linear-average-plus-noise attack.  
Finally in Section~\ref{sec:simulations}, we provide simulations that demonstrate the average-case performance in comparison to orthogonal and simplex fingerprint designs.  



\section{Problem Setup} 
\label{sec:problem_setup}

In this section, we describe the fingerprinting problem and discuss the performance criteria we will use in this paper.
We start with the model we use for the fingerprinting and attack processes.

\subsection{Mathematical model}

A content owner has a host signal that he wishes to share, but he wants to mark it with fingerprints before distributing.
We view this host signal as a vector $s\in\mathbb{R}^N$, and the marked versions of this vector will be given to $M>N$ users.  
Specifically, the $m$th user is given 
\begin{equation} 
\label{eq:fingerprint_assignement}
x_m = s + f_m,
\end{equation} 
where $f_m\in\mathbb{R}^N$ denotes the $m$th fingerprint.  
We assume the fingerprints have equal energy:
\begin{equation}
\label{eq:energy}
\gamma^2:=\|f_m\|^2=ND_f, 
\end{equation}
that is, $D_f$ denotes the average energy per dimension of each fingerprint.

We wish to design the fingerprints $\{f_m\}_{m=1}^M$ to be robust to a linear averaging attack.
In particular, let $\mathcal{K}\subseteq\{1,\ldots,M\}$ denote a group of users who together forge a copy of the host signal.
Then their linear averaging attack is of the form
\begin{equation}
\label{eq:dm:attack1}
y=\sum_{k\in\mathcal{K}}\alpha_k(s+f_k)+\epsilon,\qquad \sum_{k\in\mathcal{K}}\alpha_k=1,
\end{equation}
where $\epsilon$ is a noise vector introduced by the colluders.  
We assume $\epsilon$ is Gaussian noise with mean zero and variance $N \sigma^2 $, that is, $\sigma^2$ is the noise power per dimension.  
The relative strength of the attack noise is measured as the \emph{watermark-to-noise} ratio (WNR): 
\begin{eqnarray} 
\label{eq:wnr}
\WNR := 10 \log_{10} \Big( \frac{N D_f}{N \sigma^2} \Big).
\end{eqnarray}  
This is analogous to signal-to-noise ratio.
See Figure~\ref{fig:attack_channel} for a schematic of this attack model.

\begin{figure}[t]
\centering
\includegraphics[width=.8\columnwidth]{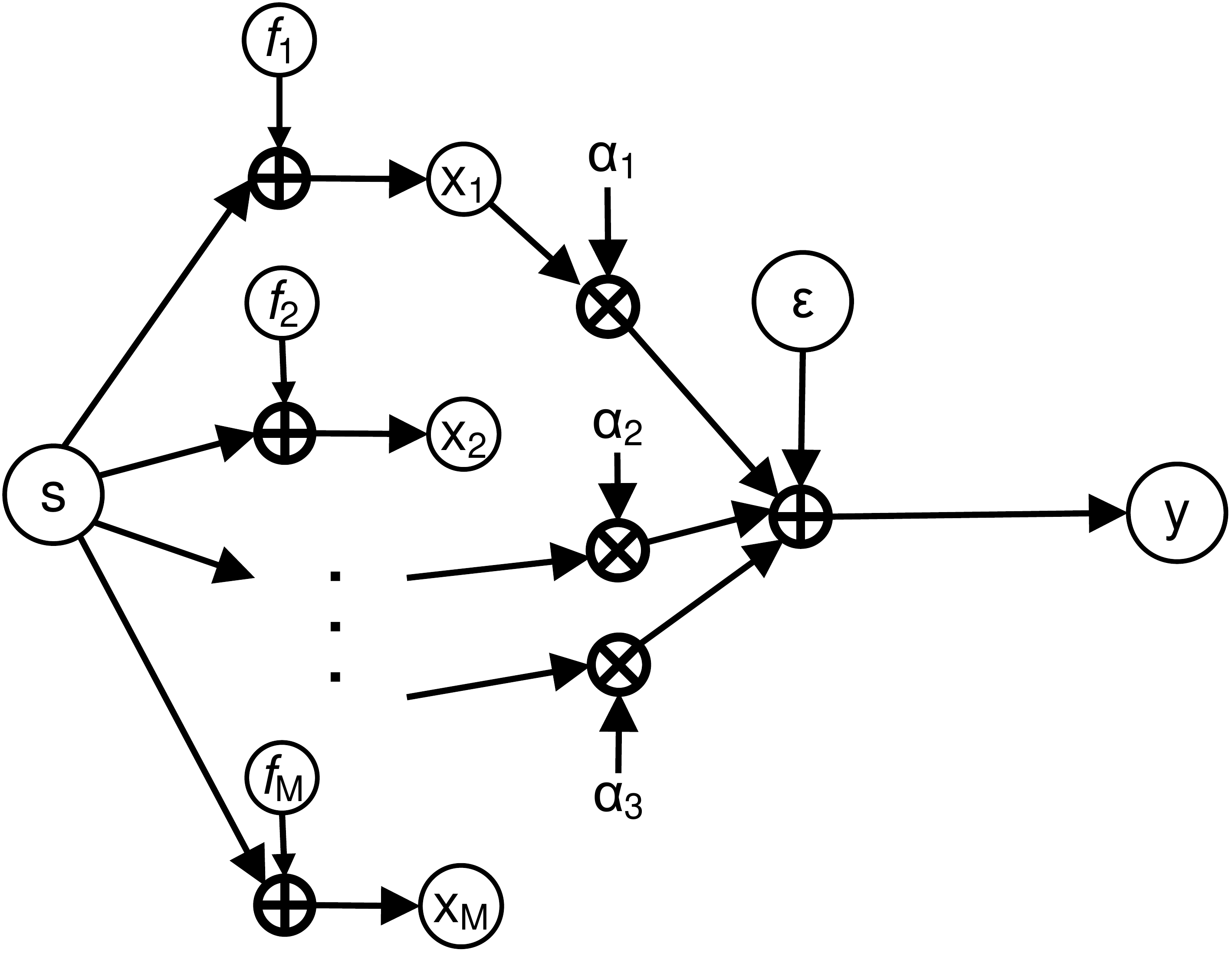}
\caption{The fingerprint assignment \eqref{eq:fingerprint_assignement} and forgery attack \eqref{eq:dm:attack1} processes.}
\label{fig:attack_channel}
\end{figure}

\subsection{Detection}

Certainly, the ultimate goal of the content owner is to detect every member in a forgery coalition.  
This can prove difficult in practice, though, particularly when some individuals contribute little to the forgery, with $\alpha_k \ll 1$.  
However, in the real world, if at least one colluder is caught, then other members could be identified through the legal process.
As such, we consider \emph{focused} detection, where a test statistic is computed for each user, and we perform a binary hypothesis test for each user to decide whether that particular user is guilty.  

With the cooperation of the content owner, the host signal can be subtracted from a forgery to isolate the fingerprint combination: 
\begin{equation*}
\label{eq:forgery} 
z := \sum_{k\in\mathcal{K}}\alpha_k f_k+\epsilon.
\end{equation*} 
We refer to $\sum_{k\in\mathcal{K}}\alpha_kf_k$ as the \emph{noiseless} fingerprint combination.  
The test statistic for each user $m$ is the normalized correlation function: 
\begin{equation} 
\label{eq:test_stat} 
\smash{T_m(z):=\frac{1}{\gamma^2}\langle z,f_m \rangle},
\end{equation} 
where $\gamma^2$ is the fingerprint energy \eqref{eq:energy}.  
For each user $m$, let $H_1(m)$ denote the guilty hypothesis $(m\in\mathcal{K})$ and $H_0(m)$ denote the innocent hypothesis $(m\not\in\mathcal{K})$.  
Letting $\tau$ denote a correlation threshold, we use the following detector:
\begin{equation} 
\label{eq:focused_detection}
\delta_m(\tau) := \left\{
   \begin{array}{cc}
     H_1(m), & T_m(z) \geq \tau,\\
     H_0(m), & T_m(z) < \tau.
   \end{array} \right.
\end{equation}  
To determine the effectiveness of our fingerprint design and focused detector, we will investigate the corresponding error probabilities.

\subsection{Error analysis}

Due in part to the noise that the coalition introduced to the forgery, there could be errors associated with our detection method.  
One type of error we can expect is the false-positive error, whose probability is denoted $\mathrm{P}_\mathrm{I}$, in which an innocent user $m$ ($m \notin \mathcal{K}$) is found guilty ($T_m(z) \geq \tau$).  
This could have significant ramifications in legal proceedings, so this error probability should be kept extremely low.  
The other error type is the false-negative error, whose probability is denoted $\mathrm{P}_\mathrm{II}$, in which a guilty user $(m \in \mathcal{K})$ is found innocent ($T_m(z) < \tau$).  
The probabilities of these two errors depend on the fingerprints $F=\{f_m\}_{m=1}^M$, the coalition $\mathcal{K}$, the weights $\alpha=\{\alpha_k\}_{k\in\mathcal{K}}$, the user $m$, and the threshold $\tau$:
\begin{align*}
\mathrm{P}_\mathrm{I}(F,m,\tau,\mathcal{K},\alpha)&:=\mathrm{Prob}[T_m(z)\geq\tau|H_0(m)],\\
\mathrm{P}_\mathrm{II}(F,m,\tau,\mathcal{K},\alpha)&:=\mathrm{Prob}[T_m(z)<\tau|H_1(m)].
\end{align*}  
We will characterize the \emph{worst-case} error probabilities over all possible coalitions and users.  

We first define the probability of a ``false alarm'': 
\begin{equation}
\label{eq:cq:P_fa}
\mathrm{P}_\mathrm{fa}(F,\tau,\mathcal{K},\alpha) := \max_{m\not\in\mathcal{K}}\mathrm{P}_\mathrm{I}(F,m,\tau,\mathcal{K},\alpha). 
\end{equation} 
This is the probability of wrongly accusing the innocent user who looks most guilty. 
Equivalently, this is the probability of accusing at least one innocent user.  
The worst-case type I error probability is given by
\begin{equation}
\label{eq:cq:worst_case_error_1}
\mathrm{P}_\mathrm{I}(F,\tau,\alpha):=\max_\mathcal{K}\mathrm{P}_\mathrm{fa}(F,\tau,\mathcal{K},\alpha).
\end{equation}  
Next, consider the probability of a ``miss'':
\begin{equation*}
\label{eq:cq:P_miss}
\mathrm{P}_\mathrm{m}(F,\tau,\mathcal{K},\alpha):=\min_{m\in\mathcal{K}}\mathrm{P}_\mathrm{II}(F,m,\tau,\mathcal{K},\alpha).
\end{equation*}  
This is the probability of not accusing the most vulnerable guilty user.  
Equivalently, this is the probability of not detecting any colluders.  
Note that this event is the opposite of detecting at least one colluder:
\begin{equation}
\label{eq:cq:P_d}
\mathrm{P}_\mathrm{d}(F,\tau,\mathcal{K},\alpha):=1-\mathrm{P}_\mathrm{m}(F,\tau,\mathcal{K},\alpha).
\end{equation}  
The worst-case type II error probability is given by
\begin{equation}
\label{eq:cq:worst_case_error_2}
\mathrm{P}_\mathrm{II}(F,\tau,\alpha):=\max_\mathcal{K}\mathrm{P}_\mathrm{m}(F,\tau,\mathcal{K},\alpha).
\end{equation}  
The \emph{worst-case} error probability is the maximum of the two error probabilities \eqref{eq:cq:worst_case_error_1} and \eqref{eq:cq:worst_case_error_2}: \begin{equation*}
\label{eq:minimax1} 
\mathrm{P}_\mathrm{e}(F,\tau,\alpha) :=\max \big\{\mathrm{P}_\mathrm{I}(F,\tau,\alpha), \mathrm{P}_\mathrm{II}(F,\tau,\alpha)\big\}. 
\end{equation*}  
The threshold parameter $\tau$ can be varied to minimize this quantity, yielding the minmax error probability:
\begin{equation}
\mathrm{P}_\mathrm{minmax}(F,\alpha) :=\min_{\tau} \mathrm{P}_\mathrm{e}(F,\tau,\alpha). \label{eq:setup:minmax_err}
\end{equation} 
In Section~\ref{sec:error_analysis}, we will analyze these error probabilities.  
We will also investigate average-case performance using simulations in Section~\ref{sec:simulations}.

\subsection{Geometric figure of merit} 
\label{sec:geom_fig_merit}

Related to the error probabilities is an important geometric figure of merit.
For each user $m$, consider the distance between two types of potential collusions: those of which $m$ is a member, and those of which $m$ is not.  
Intuitively, if every noiseless fingerprint combination involving $m$ is distant from every fingerprint combination not involving $m$, then even with moderate noise, there should be little ambiguity as to whether the $m$th user was involved or not.

To make this precise, for each user $m$, we define the ``guilty'' and ``not guilty'' sets of noiseless fingerprint combinations:
\begin{align*}
\mathcal{G}_m^{(K)}&:=\bigg\{\frac{1}{|\mathcal{K}|}\!\sum_{k\in\mathcal{K}}\!f_k:m\in\mathcal{K}\!\subseteq\!\{1,\ldots,M\},|\mathcal{K}|\!\leq\! K\bigg\},\\
\neg\mathcal{G}_m^{(K)}&:=\bigg\{\frac{1}{|\mathcal{K}|}\!\sum_{k\in\mathcal{K}}\!f_k:m\not\in\mathcal{K}\!\subseteq\!\{1,\ldots,M\},|\mathcal{K}|\!\leq\! K\bigg\}.
\end{align*}  
In words, $\mathcal{G}_m^{(K)}$ is the set of size-$K$ fingerprint combinations of equal weights ($\alpha_k = \frac{1}{K}$) which include $m$, while $\neg\mathcal{G}_m^{(K)}$ is the set of combinations which do not include $m$. 
Note that in our setup \eqref{eq:dm:attack1}, the $\alpha_k$'s were arbitrary nonnegative values bounded by one.  
We will show in Section~\ref{sec:error_analysis} that the best attack from the collusion's perspective uses equal weights ($\alpha_k = \frac{1}{K}$) so that no single colluder is particularly vulnerable.  
For this reason, we use equal weights to obtain bounds on the distance between these two sets, which we define to be
\begin{align}
\nonumber
&\dmk \\
\label{eqn:dist_guilt_notguilt}
&\qquad:= \min\{\|x-y\|_2:x\!\in\!\mathcal{G}_m^{(K)},y\!\in\!\neg\mathcal{G}_m^{(K)}\}.
\end{align}  
In Section~\ref{sec:ETF_fing_des}, we find a lower bound on this distance.

Another related parameter is the \emph{worst-case coherence}, which is the largest inner product between any two distinct fingerprints: 
\begin{equation} 
\label{eq:prb_setup:mu} 
\mu:=\max_{i\neq j}|\langle f_i,f_j \rangle|. 
\end{equation}  
Intuitively, we want this value to be small, as this would correspond to having the fingerprints spaced apart.  
In the following section, we discuss a way of designing fingerprints which we will later evaluate using the above criteria.

\section{ETF Fingerprint Design} \label{sec:ETF_fing_des}

In this section, we introduce a fingerprint design based on equiangular tight frames (ETFs).  
We will discuss some important properties of ETFs and use them to determine a lower bound on the distance \eqref{eqn:dist_guilt_notguilt} from our geometric figure of merit.  
But first, we consider the fingerprinting problem from a compressed sensing viewpoint.

\subsection{Compressed sensing viewpoint}

We wish to design fingerprints in a way that will enable us to identify the small group of users who take part in a collusion.
To do this, we consider a matrix-vector formulation of the attack \eqref{eq:dm:attack1} without noise, i.e., with $\epsilon=0$.
Specifically, let $F$ denote the $N \times M$ matrix whose columns are the fingerprints $\{f_m\}_{m=1}^M$, and let the $M \times 1$ vector $\alpha$ denote the weights used in the collusion's linear average.  
Note that $\alpha_{m}$ is zero if user $m$ is innocent, otherwise it's given by the corresponding coefficient in \eqref{eq:dm:attack1}.  
This gives 
\begin{equation*} 
\label{eq:attack:cs_view}
z = F\alpha. 
\end{equation*}  
Using this representation, the detection problem can be interpreted from a compressed sensing perspective.  
Namely, $\alpha$ is a $K$-sparse vector that we wish to recover.  
Under certain conditions on the matrix $F$, we may ``sense'' this vector with $N<M$ measurements in such a way that the $K$-sparse vector is recoverable:

\begin{thm}[$\!\!$\cite{candes2005decoding}]
Suppose an $N\times M$ matrix $F$ satisfies the restricted isometry property (RIP):
\begin{equation}
\label{eq.rip}
(1-\delta_{2K})\|\alpha\|_2^2\leq\|F\alpha\|_2^2\leq(1+\delta_{2K})\|\alpha\|_2^2
\end{equation}
for every $2K$-sparse vector $\alpha \in \mathbb{R}^M$, where $\delta_{2K} \leq \sqrt{2}-1$.
Then for every $K$-sparse vector $\alpha\in\mathbb{R}^M$,
\[ \alpha=\arg\min\|\hat{\alpha}\|_1\mbox{ subject to }F\hat{\alpha}=F\alpha. \]
\end{thm}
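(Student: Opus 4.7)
The plan is to follow the standard null-space argument for $\ell_1$ recovery under RIP, due to Cand\`es. Suppose $\hat{\alpha}$ achieves the minimum of $\|\cdot\|_1$ subject to $F\hat{\alpha}=F\alpha$, and set $h:=\hat{\alpha}-\alpha$, so that $Fh=0$ and $\|\alpha+h\|_1\le\|\alpha\|_1$. The goal is to use the RIP inequality \eqref{eq.rip} on well-chosen $2K$-sparse ``pieces'' of $h$ to deduce $h=0$.

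First I would partition the coordinates. Let $T_0$ be the support of $\alpha$ (so $|T_0|\le K$), and on $T_0^c$ sort the entries of $h$ by decreasing magnitude, breaking them into consecutive blocks $T_1,T_2,\ldots$ of size $K$. The condition $\|\alpha+h\|_1\le\|\alpha\|_1$, together with the triangle inequality applied separately on $T_0$ and $T_0^c$, yields the cone inequality $\|h_{T_0^c}\|_1\le\|h_{T_0}\|_1$. Because entries in block $T_j$ are bounded above by the average of the entries in $T_{j-1}$, one gets the standard tail estimate $\|h_{T_j}\|_2\le K^{-1/2}\|h_{T_{j-1}}\|_1$ for $j\ge 2$, and summing together with the cone inequality produces
\begin{equation*}
\sum_{j\ge 2}\|h_{T_j}\|_2\;\le\;K^{-1/2}\|h_{T_0^c}\|_1\;\le\;K^{-1/2}\|h_{T_0}\|_1\;\le\;\|h_{T_0}\|_2.
\end{equation*}

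Next I would apply RIP to the $2K$-sparse vector $h_{T_0\cup T_1}$. Since $Fh=0$, we can write $F h_{T_0\cup T_1}=-\sum_{j\ge 2}F h_{T_j}$, so
\begin{equation*}
(1-\delta_{2K})\|h_{T_0\cup T_1}\|_2^2\;\le\;\|Fh_{T_0\cup T_1}\|_2^2\;=\;-\Big\langle Fh_{T_0\cup T_1},\sum_{j\ge 2}Fh_{T_j}\Big\rangle.
\end{equation*}
The main technical input is the RIP-based inner product bound: for any two disjointly supported $K$-sparse vectors $u,v$, $|\langle Fu,Fv\rangle|\le\delta_{2K}\|u\|_2\|v\|_2$ (obtained by polarizing $\|F(u\pm v)\|_2^2$). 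Applied to $h_{T_0},h_{T_j}$ and $h_{T_1},h_{T_j}$ and combined with the Cauchy--Schwarz-type bound $\|h_{T_0}\|_2+\|h_{T_1}\|_2\le\sqrt{2}\,\|h_{T_0\cup T_1}\|_2$, this gives
\begin{equation*}
(1-\delta_{2K})\|h_{T_0\cup T_1}\|_2^2\;\le\;\sqrt{2}\,\delta_{2K}\,\|h_{T_0\cup T_1}\|_2\sum_{j\ge 2}\|h_{T_j}\|_2\;\le\;\sqrt{2}\,\delta_{2K}\,\|h_{T_0\cup T_1}\|_2^2,
\end{equation*}
where the last step uses the tail estimate together with $\|h_{T_0}\|_2\le\|h_{T_0\cup T_1}\|_2$. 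The hypothesis $\delta_{2K}\le\sqrt{2}-1$ is exactly what makes $1-\delta_{2K}\ge\sqrt{2}\,\delta_{2K}$, forcing $h_{T_0\cup T_1}=0$; the cone inequality then also forces $h_{T_0^c}=0$, so $h=0$ and $\hat{\alpha}=\alpha$.

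The main obstacle, and the step that deserves the most care, is the RIP inner product inequality for disjointly supported sparse vectors, since that is what converts the quadratic energy bound into a bound on the cross terms and ultimately pins down the sharp constant $\sqrt{2}-1$. Everything else is bookkeeping on the block decomposition and the cone constraint.
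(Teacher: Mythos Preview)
Your argument is correct and is essentially the standard Cand\`es null-space/block-decomposition proof that yields the sharp constant $\sqrt{2}-1$. However, the paper does not actually prove this theorem: it is stated as a cited result from \cite{candes2005decoding}, with no accompanying proof in the text. So there is nothing to compare against---you have supplied a valid proof where the paper simply invokes the literature.
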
  

Thus, if $F$ satisfies RIP \eqref{eq.rip}, we can recover the $K$-sparse vector $\alpha\in\mathbb{R}^M$ by linear programming.  
For the attack model with adversarial noise, $z = F\alpha + \epsilon$, if $F$ satisfies RIP \eqref{eq.rip}, linear programming will still produce an estimate $\hat{\alpha}$ of the sparse vector \cite{candes2006stable}.  
However, the distance between $\hat{\alpha}$ and $\alpha$ will be on the order of 10 times the size of the error $\epsilon$.  
Due to potential legal ramifications of false accusations, this order of error is not tolerable.  
Note that these methods (both for the noiseless and noisy cases) recover the entire vector $\alpha$; equivalently, they identify the entire collusion.
That said, we will investigate RIP matrices for fingerprint design, but to minimize false accusations, we will use focused detection \eqref{eq:focused_detection} to identify colluders.

\subsection{Geometric figure of merit for RIP matrices}

We now investigate how well RIP matrices perform with respect to our geometric figure of merit.
Without loss of generality, we assume the fingerprints are unit norm; since they have equal energy $\gamma^2$, the fingerprint combination $z$ can be normalized by $\gamma$ before the detection phase.   
With this in mind, we have the following a lower bound on the distance~\eqref{eqn:dist_guilt_notguilt} between the ``guilty'' and ``not guilty'' sets corresponding to any user $m$:

\begin{thm}
\label{thm.rip fingerprints}
Suppose fingerprints $F=[f_1,\ldots,f_M]$ satisfy the restricted isometry property \eqref{eq.rip}.
Then
\begin{equation} 
\label{lem:distbnd1}
\dmk\geq\sqrt{\frac{1-\delta_{2K}}{K(K-1)}}. 
\end{equation}
\end{thm}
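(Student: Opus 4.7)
The plan is to reduce the distance computation to an application of RIP by writing the difference of a ``guilty'' average and a ``not-guilty'' average as $F\beta$ for a suitably sparse coefficient vector $\beta$, then minimizing $\|\beta\|_2^2$ over all admissible configurations.

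First I would fix $x\in\mathcal{G}_m^{(K)}$ and $y\in\neg\mathcal{G}_m^{(K)}$, writing $x=\frac{1}{K_1}\sum_{k\in\mathcal{K}_1}f_k$ and $y=\frac{1}{K_2}\sum_{k\in\mathcal{K}_2}f_k$, where $m\in\mathcal{K}_1$, $m\notin\mathcal{K}_2$, and $K_1=|\mathcal{K}_1|$, $K_2=|\mathcal{K}_2|$ are each at most $K$. Then $x-y=F\beta$, where $\beta=\frac{1}{K_1}\mathbf{1}_{\mathcal{K}_1}-\frac{1}{K_2}\mathbf{1}_{\mathcal{K}_2}$ is supported on $\mathcal{K}_1\cup\mathcal{K}_2$, a set of size at most $K_1+K_2\leq 2K$. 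So $\beta$ is $2K$-sparse, and RIP gives
\[
\|x-y\|_2^2=\|F\beta\|_2^2\geq(1-\delta_{2K})\|\beta\|_2^2.
\]

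Next I would compute $\|\beta\|_2^2$ by expanding the inner product, using $\|\mathbf{1}_{\mathcal{K}_j}\|_2^2=K_j$ and $\langle\mathbf{1}_{\mathcal{K}_1},\mathbf{1}_{\mathcal{K}_2}\rangle=a$ where $a:=|\mathcal{K}_1\cap\mathcal{K}_2|$, to obtain the clean expression
\[
\|\beta\|_2^2=\frac{1}{K_1}+\frac{1}{K_2}-\frac{2a}{K_1K_2}.
\]
The remaining task is to minimize this over the feasible region: $1\leq K_1,K_2\leq K$, $0\leq a\leq\min(K_1,K_2)$, together with the critical constraint $a\leq K_1-1$ (because $m\in\mathcal{K}_1\setminus\mathcal{K}_2$ forces at least one index of $\mathcal{K}_1$ to lie outside $\mathcal{K}_2$).

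The main obstacle is carefully carrying out this constrained minimization so as to match the claimed constant. Since $\|\beta\|_2^2$ is decreasing in $a$, I would take $a$ to its upper bound and split into two cases: when $K_1\leq K_2+1$ take $a=K_1-1$, yielding $\|\beta\|_2^2=(K_2-K_1+2)/(K_1K_2)$, minimized over the case by $K_1=K$, $K_2=K-1$, giving $\tfrac{1}{K(K-1)}$; when $K_1\geq K_2+2$ take $a=K_2$, yielding $\|\beta\|_2^2=(K_1-K_2)/(K_1K_2)\geq 2/(K_1K_2)\geq 2/(K(K-2))$, which is strictly larger. Thus the overall minimum of $\|\beta\|_2^2$ is $\tfrac{1}{K(K-1)}$, achieved when $\mathcal{K}_2=\mathcal{K}_1\setminus\{m\}$ with $|\mathcal{K}_1|=K$. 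Substituting into the RIP bound yields $\|x-y\|_2\geq\sqrt{(1-\delta_{2K})/(K(K-1))}$, and taking the infimum over $x,y$ completes the proof. The geometric takeaway worth emphasizing is that the worst case is precisely when the innocent coalition consists of all colluders except user $m$, which is the most adversarial configuration one would expect.
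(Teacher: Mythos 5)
Your proof is correct and follows essentially the same route as the paper's: write the difference of the two averages as $F\beta$ with a $2K$-sparse coefficient vector, apply the lower RIP inequality, reduce to $\|\beta\|_2^2=(K_1+K_2-2a)/(K_1K_2)$, and minimize under the constraint $a\leq K_1-1$ forced by $m\in\mathcal{K}_1\setminus\mathcal{K}_2$. Your explicit two-case minimization is in fact cleaner than the paper's somewhat informal argument, but it is the same idea, arriving at the same extremal configuration $\mathcal{K}_2=\mathcal{K}_1\setminus\{m\}$ with $|\mathcal{K}_1|=K$.
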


\begin{proof}
Take $\mathcal{K},\mathcal{K}'\subseteq\{1,\ldots,M\}$ such that $|\mathcal{K}|,|\mathcal{K}'|\leq K$ and $m\in\mathcal{K}\setminus\mathcal{K}'$.
Then the left-hand inequality of the restricted isometry property \eqref{eq.rip} gives
\begin{align}
\nonumber
&\bigg\|\frac{1}{|\mathcal{K}|}\sum_{m\in\mathcal{K}}f_m-\frac{1}{|\mathcal{K}'|}\sum_{m\in\mathcal{K}'}f_m\bigg\|^2\\
\nonumber
&=\bigg\|\big(\tfrac{1}{|\mathcal{K}|}\!-\!\tfrac{1}{|\mathcal{K}'|}\big)\!\!\!\!\!\!\sum_{m\in\mathcal{K}\cap\mathcal{K}'}\!\!\!\!\!\!f_m+\tfrac{1}{|\mathcal{K}|}\!\!\!\!\!\!\sum_{m\in\mathcal{K}\setminus\mathcal{K}'}\!\!\!\!\!\!f_m-\tfrac{1}{|\mathcal{K}'|}\!\!\!\!\!\!\sum_{m\in\mathcal{K}'\setminus\mathcal{K}}\!\!\!\!\!\!f_m\bigg\|^2\\
\nonumber
&\geq(1-\delta_{|\mathcal{K}\cup\mathcal{K}'|})\bigg(|\mathcal{K}\cap\mathcal{K}'|\big(\tfrac{1}{|\mathcal{K}|}\!-\!\tfrac{1}{|\mathcal{K}'|}\big)^2+\tfrac{|\mathcal{K}\setminus\mathcal{K}'|}{|\mathcal{K}|^2}+\tfrac{|\mathcal{K}'\setminus\mathcal{K}|}{|\mathcal{K}'|^2}\bigg)\\
\label{eq.distance bound}
&=\frac{1-\delta_{|\mathcal{K}\cup\mathcal{K}'|}}{|\mathcal{K}||\mathcal{K}'|}\bigg(|\mathcal{K}|+|\mathcal{K}'|-2|\mathcal{K}\cap\mathcal{K}'|\bigg).
\end{align}
For a fixed $|\mathcal{K}|$, we will find a lower bound for
\begin{equation}
\label{eq.min value} \frac{1}{|\mathcal{K}|}\bigg(|\mathcal{K}|+|\mathcal{K}'|-2|\mathcal{K}\cap\mathcal{K}'|\bigg)=1+\frac{|\mathcal{K}|-2|\mathcal{K}\cap\mathcal{K}'|}{|\mathcal{K}'|}.
\end{equation}
Since we can have $|\mathcal{K}\cap\mathcal{K}'|>\frac{|\mathcal{K}|}{2}$, 
we know $\frac{|\mathcal{K}|-2|\mathcal{K}\cap\mathcal{K}'|}{|\mathcal{K}'|}<0$ when \eqref{eq.min value} is minimized.  
That said, $|\mathcal{K}'|$ must be as small as possible, that is, $|\mathcal{K}'|=|\mathcal{K}\cap\mathcal{K}'|$.
Thus, when \eqref{eq.min value} is minimized, we must have
\begin{equation*}
\frac{1}{|\mathcal{K}|}\bigg(|\mathcal{K}|+|\mathcal{K}'|-2|\mathcal{K}\cap\mathcal{K}'|\bigg)=\frac{|\mathcal{K}|}{|\mathcal{K}\cap\mathcal{K}'|}-1,
\end{equation*}
i.e., $|\mathcal{K}\cap\mathcal{K}'|$ must be as large as possible.
Since $m\in\mathcal{K}\setminus\mathcal{K}'$, we have $|\mathcal{K}\cap\mathcal{K}'|\leq|\mathcal{K}|-1$.
Therefore,
\begin{equation}
\label{eq.min value 2}
\frac{1}{|\mathcal{K}|}\bigg(|\mathcal{K}|+|\mathcal{K}'|-2|\mathcal{K}\cap\mathcal{K}'|\bigg)\geq\frac{1}{|\mathcal{K}|-1}.
\end{equation}
Substituting \eqref{eq.min value 2} into \eqref{eq.distance bound} gives 
\begin{equation*}
\bigg\|\tfrac{1}{|\mathcal{K}|}\!\!\sum_{m\in\mathcal{K}}\!\!f_m-\tfrac{1}{|\mathcal{K}'|}\!\!\sum_{m\in\mathcal{K}'}\!\!f_m\bigg\|^2
\geq\frac{1-\delta_{|\mathcal{K}\cup\mathcal{K}'|}}{|\mathcal{K}|(|\mathcal{K}|-1)}
\geq\frac{1-\delta_{2K}}{K(K-1)}.
\end{equation*}
Since this bound holds for every $m$, $\mathcal{K}$ and $\mathcal{K}'$ with $m\in\mathcal{K}\setminus\mathcal{K}'$, we have~\eqref{lem:distbnd1}.
\end{proof}

Note that \eqref{lem:distbnd1} depends on $\delta_{2K}$, and so it's natural to ask how small $\delta_{2K}$ can be for a given fingerprint design.
Also, which matrices even satisfy RIP~\eqref{eq.rip}? 
In general, we say $F$ is $(K,\delta)$-RIP if for every $K$-sparse vector $x$,
\begin{equation*}
(1-\delta)\|x\|_2^2\leq\|Fx\|_2^2\leq(1+\delta)\|x\|_2^2.
\end{equation*}
Proving that a given matrix $F$ is $(K,\delta)$-RIP involves calculating eigenvalues of all size-$K$ submatrices, which is computationally difficult for large matrices.  
The following lemma makes this explicit:

\begin{lem} 
\label{lem:delta_min_bnd}
The smallest $\delta$ for which $F$ is $(K,\delta)$-RIP is
\begin{equation}
\label{eq.delta min}
\delta_\mathrm{min}
:=\max_{\substack{\mathcal{K}\subseteq\{1,\ldots,M\}\\|\mathcal{K}|=K}}\|F_\mathcal{K}^*F_\mathcal{K}-\mathrm{I}_K\|_2.
\end{equation}
\end{lem}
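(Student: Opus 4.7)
The plan is to unpack the definition of $(K,\delta)$-RIP and recognize that, restricted to any fixed support set $\mathcal{K}$, the tightest admissible $\delta$ is exactly the operator norm $\|F_{\mathcal{K}}^* F_{\mathcal{K}} - I_K\|_2$. Taking the maximum over $K$-subsets then yields \eqref{eq.delta min}.

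First, I would rewrite the RIP inequalities in a support-wise manner. Any $K$-sparse $x \in \mathbb{R}^M$ is supported on some $\mathcal{K} \subseteq \{1,\ldots,M\}$ with $|\mathcal{K}| = K$ (vectors of strictly smaller support are handled by padding $\mathcal{K}$ arbitrarily). Writing $y \in \mathbb{R}^K$ for the restriction of $x$ to $\mathcal{K}$, we have $Fx = F_{\mathcal{K}} y$ and $\|x\|_2 = \|y\|_2$, so the RIP inequalities are equivalent to the single two-sided bound
\begin{equation*}
\bigl| \, \|F_{\mathcal{K}} y\|_2^2 - \|y\|_2^2 \, \bigr| \leq \delta \, \|y\|_2^2 \qquad \forall\, y \in \mathbb{R}^K,\ \forall\, |\mathcal{K}| = K.
\end{equation*}

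Next, I would convert this to a statement about a Hermitian quadratic form. Using the identity $\|F_{\mathcal{K}} y\|_2^2 - \|y\|_2^2 = \langle (F_{\mathcal{K}}^* F_{\mathcal{K}} - I_K) y, y \rangle$, the condition above says that for each $\mathcal{K}$, the Rayleigh quotient of the self-adjoint matrix $A_{\mathcal{K}} := F_{\mathcal{K}}^* F_{\mathcal{K}} - I_K$ is bounded in absolute value by $\delta$. Since $A_{\mathcal{K}}$ is Hermitian, its operator 2-norm equals its spectral radius, which coincides with $\sup_{y \neq 0} |\langle A_{\mathcal{K}} y, y \rangle| / \|y\|_2^2$. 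Hence the smallest $\delta$ for which RIP holds on vectors supported in $\mathcal{K}$ is precisely $\|A_{\mathcal{K}}\|_2$.

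Finally, since the RIP must hold simultaneously over all $K$-sparse vectors, and every such vector lies in the union of the supports indexed by $K$-subsets $\mathcal{K}$, the smallest admissible $\delta$ is the maximum of $\|A_{\mathcal{K}}\|_2$ over all $\mathcal{K}$ with $|\mathcal{K}| = K$. This is exactly $\delta_{\min}$ in \eqref{eq.delta min}. There is no real obstacle here beyond invoking the standard fact that $\|A\|_2 = \sup_{\|y\|_2 = 1} |\langle A y, y \rangle|$ for Hermitian $A$; the proof is essentially a chain of equivalences.
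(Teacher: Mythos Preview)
Your proposal is correct and follows essentially the same approach as the paper: both rewrite the RIP condition as a bound on the Rayleigh quotient of the Hermitian matrix $F_{\mathcal{K}}^*F_{\mathcal{K}}-\mathrm{I}_K$, invoke that the operator $2$-norm of a Hermitian matrix equals its maximal absolute Rayleigh quotient, and then maximize over all $K$-subsets $\mathcal{K}$. The paper organizes this as two separate claims (that $\delta_{\min}$ works, and that nothing smaller does, the latter witnessed by the extremal eigenvector), whereas you phrase it as a chain of equivalences, but the content is identical.
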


\begin{proof}
We first note that $F$ being $(K,\delta)$-RIP trivially implies that $F$ is $(K,\delta+\epsilon)$-RIP for every $\epsilon>0$.
It therefore suffices to show that (i) $F$ is $(K,\delta_\mathrm{min})$-RIP, and (ii) $F$ is not $(K,\delta)$-RIP for any $\delta<\delta_\mathrm{min}$.
To this end, pick some $K$-sparse vector $x$.
To prove (i), we need to show that
\begin{equation}
\label{eq.rip min}
(1-\delta_\mathrm{min})\|x\|^2
\leq\|Fx\|^2
\leq(1+\delta_\mathrm{min})\|x\|^2.
\end{equation}
Let $\mathcal{K}\subseteq\{1,\dots,M\}$ be the size-$K$ support of $x$, and let $x_\mathcal{K}$ be the corresponding subvector.
Then rearranging \eqref{eq.rip min} gives
\begin{align}
\nonumber
\delta_\mathrm{min}
\geq\Big|\tfrac{\|Fx\|^2}{\|x\|^2}-1\Big|
&=\Big|\tfrac{\langle F_\mathcal{K}x_\mathcal{K},F_\mathcal{K}x_\mathcal{K}\rangle-\langle x_\mathcal{K},x_\mathcal{K}\rangle}{\|x_\mathcal{K}\|^2}\Big|\\
\label{eq.delta min bound}
&=\Big|\Big\langle \tfrac{x_\mathcal{K}}{\|x_\mathcal{K}\|},(F_\mathcal{K}^*F_\mathcal{K}-\mathrm{I}_K)\tfrac{x_\mathcal{K}}{\|x_\mathcal{K}\|} \Big\rangle\Big|.
\end{align}
Since, by definition, $\delta_\mathrm{min}$ maximizes \eqref{eq.delta min bound} over all supports $\mathcal{K}$ and entry values $x_\mathcal{K}$, the inequality necessarily holds; that is, $F$ is necessarily $(K,\delta_\mathrm{min})$-RIP.
Furthermore, equality is achieved by the support $\mathcal{K}$ which maximizes \eqref{eq.delta min} and the eigenvector $x_\mathcal{K}$ corresponding to the largest eigenvalue of $F_\mathcal{K}^*F_\mathcal{K}-\mathrm{I}_K$; this proves (ii).
\end{proof}

Since calculating eigenvalues for all size-$K$ submatrices is computationally difficult, the Gershgorin circle theorem is often used to obtain a coarse bound in demonstrating RIP:

\begin{thm}[Gershgorin circle theorem \cite{varga2004geršgorin}] 
\label{thm:gersh}
Take a $K\times K$ matrix $A$.
Then for each eigenvalue $\lambda$ of $A$, there exists an $i\in\{1,\ldots,K\}$ such that $\lambda$ lies in the complex disk centered at $A_{ii}$ of radius $\sum_{j\neq i}|A_{ij}|$.
\end{thm}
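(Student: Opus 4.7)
The plan is to use the standard eigenvector-componentwise argument. I would start from an eigenpair $(\lambda, v)$ with $v \neq 0$, and then extract information by looking at the equation $Av = \lambda v$ at the index where $v$ attains its maximum modulus.

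More concretely, first I would fix $\lambda$ an eigenvalue of $A$ and choose a corresponding eigenvector $v \in \mathbb{C}^K$ with $v \neq 0$. Let $i \in \{1,\ldots,K\}$ be an index for which $|v_i| = \max_j |v_j|$; note that $|v_i| > 0$ because $v \neq 0$. This choice of $i$ is the one the theorem's conclusion will produce.

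Next, I would read off the $i$th row of the identity $Av = \lambda v$, giving
\begin{equation*}
\sum_{j=1}^K A_{ij} v_j = \lambda v_i,
\end{equation*}
which after isolating the diagonal term rearranges to
\begin{equation*}
(\lambda - A_{ii}) v_i = \sum_{j \neq i} A_{ij} v_j.
\end{equation*}
Taking absolute values, applying the triangle inequality, and dividing both sides by $|v_i| > 0$ yields
\begin{equation*}
|\lambda - A_{ii}| \leq \sum_{j \neq i} |A_{ij}| \frac{|v_j|}{|v_i|} \leq \sum_{j \neq i} |A_{ij}|,
\end{equation*}
where the last inequality uses $|v_j| \leq |v_i|$ by the choice of $i$. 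This is exactly the statement that $\lambda$ lies in the complex disk centered at $A_{ii}$ of radius $\sum_{j\neq i}|A_{ij}|$.

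There is no real obstacle here: the only subtlety is making the right choice of index $i$ (namely, the one maximizing $|v_j|$), which is what allows the ratios $|v_j|/|v_i|$ to be bounded by $1$ and thereby eliminates the dependence on $v$. Everything else is the triangle inequality applied to a single scalar equation.
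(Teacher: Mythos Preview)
Your argument is the standard and fully correct proof of the Gershgorin circle theorem. However, the paper does not actually prove this statement: it is stated with a citation to Varga's monograph and used as a black box to derive the coherence bound on $\delta_{2K}$. So there is no ``paper's own proof'' to compare against; your proposal simply supplies the classical proof that the paper omits.
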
  

This leads to the following well-known bound on $\delta_{2K}$ in terms of worst-case coherence:

\begin{lem} \label{lem:bnd_delta2K}
Given a matrix $F$ with unit-norm columns, then 
\begin{equation} 
\label{eqn:lem_bnd_delta2K}
\delta_{2K}\leq(2K-1)\mu,  
\end{equation} 
where $\mu$ is the worst-case coherence \eqref{eq:prb_setup:mu}.
\end{lem}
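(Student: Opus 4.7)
The plan is to combine Lemma~\ref{lem:delta_min_bnd} (which characterizes $\delta_{2K}$ as a maximum operator-norm of $F_\mathcal{K}^*F_\mathcal{K}-\mathrm{I}_{2K}$ over size-$2K$ subsets) with the Gershgorin circle theorem applied to each such submatrix. Essentially, Gershgorin converts a spectral bound into a pointwise bound on entries, and worst-case coherence controls those entries uniformly.

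First I would fix an arbitrary $\mathcal{K}\subseteq\{1,\ldots,M\}$ with $|\mathcal{K}|=2K$ and examine the $2K\times 2K$ Gram-like matrix $A:=F_\mathcal{K}^*F_\mathcal{K}-\mathrm{I}_{2K}$. Because the columns of $F$ have unit norm, $(F_\mathcal{K}^*F_\mathcal{K})_{ii}=\|f_i\|^2=1$, so every diagonal entry of $A$ is zero. The off-diagonal entries of $A$ are inner products $\langle f_i,f_j\rangle$ with $i\neq j$, each bounded in magnitude by $\mu$ by definition~\eqref{eq:prb_setup:mu}.

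Next I would invoke Theorem~\ref{thm:gersh}: every eigenvalue $\lambda$ of $A$ lies in some disk centered at $A_{ii}=0$ with radius $\sum_{j\neq i}|A_{ij}|\leq (2K-1)\mu$. Hence $|\lambda|\leq(2K-1)\mu$ for every eigenvalue, and since $A$ is symmetric, $\|A\|_2\leq(2K-1)\mu$. Because $\mathcal{K}$ was arbitrary, this bound holds uniformly over all size-$2K$ subsets, so by Lemma~\ref{lem:delta_min_bnd},
\[
\delta_{2K}=\max_{|\mathcal{K}|=2K}\|F_\mathcal{K}^*F_\mathcal{K}-\mathrm{I}_{2K}\|_2\leq(2K-1)\mu,
\]
which is exactly~\eqref{eqn:lem_bnd_delta2K}. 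There is no real obstacle here; the only mild subtlety is to remember that Lemma~\ref{lem:delta_min_bnd} is stated for generic $K$, so one applies it with $K$ replaced by $2K$ to get the bound on $\delta_{2K}$.
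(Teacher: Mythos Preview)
Your proposal is correct and follows essentially the same approach as the paper: invoke Lemma~\ref{lem:delta_min_bnd} (with $2K$ in place of $K$) to express $\delta_{2K}$ as a maximum operator norm of the zero-diagonal Gram perturbation, then bound each eigenvalue via Gershgorin using the coherence bound on the off-diagonal entries. Your write-up is in fact slightly more explicit than the paper's, but the argument is identical.
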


\begin{proof}   
From Lemma~\ref{lem:delta_min_bnd}, the optimal $\delta_{2K}$ is
\begin{equation*}
\label{eq.delta} 
\delta_{2K}:=\max_{\mathcal{K}\subseteq\{1,\ldots,M\}\atop|\mathcal{K}|=2K}\|F_\mathcal{K}^*F_\mathcal{K}-\mathrm{I}_K\|_2.
\end{equation*}
In words, given an ensemble of fingerprints, we consider each subcollection of size $2K$, subtract the identity from their $2K\times 2K$ Gram matrix and calculate the largest eigenvalue; the largest eigenvalue we find over all subcollections is what we call $\delta_{2K}$.  
Since the fingerprints have unit norm, every Gram matrix of $2K$ fingerprints will have ones on the diagonal, and the absolute values of the off-diagonal entries will be no more than $\mu$.
Therefore, by applying the Gershgorin circle theorem (Theorem~\ref{thm:gersh}) to $\delta_{2K}$, we obtain \eqref{eqn:lem_bnd_delta2K}.
\end{proof}
  
Combining Theorem~\ref{thm.rip fingerprints} and Lemma~\ref{lem:bnd_delta2K} yields a coherence-based lower bound on the distance between the ``guilty'' and ``not guilty'' sets corresponding to any user $m$:
\begin{thm}
\label{thm:etf_fingerprints}
Suppose fingerprints $F=[f_1,\ldots,f_M]$ are unit-norm with worst-case coherence $\mu$.  
Then
\begin{equation}
\label{eq:dm:bound_dist_guilty_notguilty_set}
\dmk\geq\sqrt{\frac{1-(2K-1)\mu}{K(K-1)}}. 
\end{equation} 
\end{thm}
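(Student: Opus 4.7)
The plan is to obtain this result as an immediate corollary of the two results just established, namely Theorem~\ref{thm.rip fingerprints} and Lemma~\ref{lem:bnd_delta2K}. The key observation is that both ingredients are already in place: Theorem~\ref{thm.rip fingerprints} bounds the distance $\dmk$ in terms of the restricted isometry constant $\delta_{2K}$, while Lemma~\ref{lem:bnd_delta2K} bounds $\delta_{2K}$ in terms of the worst-case coherence $\mu$. All that remains is to compose the two inequalities.

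More concretely, I would first invoke Lemma~\ref{lem:bnd_delta2K} to assert that the unit-norm fingerprint ensemble $F$ satisfies the restricted isometry property \eqref{eq.rip} with constant $\delta_{2K}\leq(2K-1)\mu$. I would then apply Theorem~\ref{thm.rip fingerprints} with this RIP constant to conclude
\[
\dmk\geq\sqrt{\frac{1-\delta_{2K}}{K(K-1)}}.
\]
Finally, since the function $x\mapsto\sqrt{(1-x)/(K(K-1))}$ is monotonically decreasing on $(-\infty,1]$, substituting the coherence-based upper bound on $\delta_{2K}$ preserves the inequality and yields \eqref{eq:dm:bound_dist_guilty_notguilty_set}.

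There is essentially no obstacle: the bound is a direct composition of the two preceding results and requires no additional machinery. The only minor subtlety worth flagging is that the conclusion is vacuous (or must be interpreted as $0$) when $(2K-1)\mu\geq 1$, so the bound is informative only in the regime where the coherence is small enough that $1-(2K-1)\mu>0$. This regime is precisely the one in which RIP is known to hold via the Gershgorin argument of Lemma~\ref{lem:bnd_delta2K}, so no inconsistency arises and no separate case analysis is needed.
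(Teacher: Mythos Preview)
Your proposal is correct and matches the paper's approach exactly: the paper presents this theorem as an immediate consequence of Theorem~\ref{thm.rip fingerprints} and Lemma~\ref{lem:bnd_delta2K}, just as you do. Your added remark about the bound being informative only when $(2K-1)\mu<1$ is a reasonable clarification but is not needed for the argument itself.
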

 
We would like $\mu$ to be small, so that the lower bound \eqref{eq:dm:bound_dist_guilty_notguilty_set} is as large as possible.  
But for a fixed $N$ and $M$, the worst-case coherence of unit-norm fingerprints cannot be arbitrarily small; it necessarily satisfies the Welch bound \cite{welch1974lower}: 
\begin{equation*} 
\mu\geq\sqrt{\frac{M-N}{N(M-1)}}. 
\end{equation*} 
Equality in the Welch bound occurs precisely in the well-studied case where the fingerprints form an equiangular tight frame (ETF).  
An \emph{equiangular tight frame} is a $N \times M$ matrix which has orthogonal rows of equal norm and unit-norm columns whose inner products have equal magnitude~\cite{strohmer2003grassmannian}.

One type of ETF has already been proposed for fingerprint design: the simplex~\cite{kiyavash2009regular}. 
The simplex is an ETF with $M=N+1$ and $\mu=\frac{1}{N}$.  
In fact, \cite{kiyavash2009regular} gives a derivation for the value of the distance \eqref{eqn:dist_guilt_notguilt} in this case:
\begin{equation} 
\label{eq.simplex distance} 
\mathrm{dist}(\mathcal{G}_m^{(K)},\neg\mathcal{G}_m^{(K)})
=\sqrt{\frac{1}{K(K-1)}\frac{M}{M-1}}.
\end{equation}  
The bound \eqref{eq:dm:bound_dist_guilty_notguilty_set} is lower than \eqref{eq.simplex distance} by a factor of $\sqrt{1 - \frac{2K}{N+1}}$, and for practical cases in which $K \ll N$, they are particularly close.   
Overall, ETF fingerprint design is a natural generalization of the provably optimal simplex design of~\cite{kiyavash2009regular}.

\subsection{Existence of RIP matrices}

\begin{figure*}[t]
\begin{equation*} 
F=\frac1{\sqrt{3}}\left[\begin{array}{cccccccccccccccc}
+&-&+&-&+&-&+&-\\
+&+&-&-&&&&&+&-&+&-\\
+&-&-&+&&&&&&&&&+&-&+&-\\
&&&&+&+&-&-&+&+&-&-\\
&&&&+&-&-&+&&&&&+&+&-&-
\\&&&&&&&&+&-&-&+&+&-&-&+  
\end{array}\right].
\end{equation*}
\caption{The equiangular tight frame constructed in Example~\ref{ex:ETF}.}
\label{fig:etf_construction}
\end{figure*}

We now discuss the existence of RIP matrices as well as the construction of a specific class of ETFs.  
Given the RIP definition \eqref{eq.rip}, it might not be clear how many matrices satisfy this condition.  
Surprisingly, RIP matrices are abundant; for any $\delta_{2K}$, there exists a constant $C$ such that if $N\geq CK\log(M/K)$, then matrices of Gaussian or Bernoulli ($\pm1$) entries satisfy RIP with high probability.  
However, as mentioned in the previous section, checking if a given matrix satisfies RIP is computationally difficult.  

Fortunately, there are some deterministic constructions of RIP matrices, such as \cite{devore2007deterministic,fickus2010steiner}.  
However the performance, measured by how large $K$ can be, is not as good as the random constructions; the random constructions only require $N=\Omega(K\log^a M)$, while the deterministic constructions require $N=\Omega(K^2)$.
The specific class of ETFs additionally requires $M\leq N^2$ \cite{strohmer2003grassmannian}.

Whether $N$ scales as $K$ versus $K^2$ is an important distinction between random and deterministic RIP matrices in the compressed sensing community.  However, this difference offers no advantage for fingerprinting.  Ergun et al.~showed that for any fingerprinting system, a collusion of $K=O(\sqrt{N/\ln N})$ is sufficient to overcome the fingerprints \cite{ergun1999note}.  This means with such a $K$, the detector cannot, with high probability, identify any attacker without incurring a significant false-alarm probability $\mathrm{P}_\mathrm{fa}$ \eqref{eq:cq:P_fa}.  This constraint is more restrictive than  deterministic ETF constructions, as $\sqrt{N/\ln N} < \sqrt{N} \ll N / \log^a M$.  Consequently, random RIP constructions are no better for fingerprint design than deterministic constructions.

Ergun's bound indicates that with a large enough $K$, colluders can overcome any fingerprinting system and render our detector unreliable.  We now show that if $K$ is sufficiently large, the colluders can \emph{exactly} recover the original signal $s$:

\begin{lem}
Suppose the real equal-norm fingerprints $\{f_k\}_{k\in\mathcal{K}}$ do not lie in a common hyperplane.
Then $s$ is the unique minimizer of
\begin{equation*}
g(x):=\sum_{k\in\mathcal{K}}\bigg(\|x-(s+f_k)\|^2-\frac{1}{|\mathcal{K}|}\sum_{k'\in\mathcal{K}}\|x-(s+f_{k'})\|^2\bigg)^2.
\end{equation*}
\end{lem}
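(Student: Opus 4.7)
The plan is to reduce $g$ to an explicit positive semidefinite quadratic form in $x-s$ and then show that the hyperplane hypothesis is precisely what makes that form positive definite.

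First I would substitute $y:=x-s$ and exploit the equal-norm hypothesis $\|f_k\|=\gamma$. A direct expansion gives
\begin{equation*}
\|y-f_k\|^2-\frac{1}{|\mathcal{K}|}\sum_{k'\in\mathcal{K}}\|y-f_{k'}\|^2
= -2\langle y, f_k-\bar f\rangle,
\end{equation*}
where $\bar f:=\tfrac{1}{|\mathcal{K}|}\sum_{k'\in\mathcal{K}}f_{k'}$. The quadratic terms $\|y\|^2$ cancel automatically, and the fingerprint-norm terms $\|f_k\|^2$ cancel precisely because the fingerprints are equal-norm. This identity is the only computation in the proof.

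Next, squaring and summing yields
\begin{equation*}
g(s+y)=4\sum_{k\in\mathcal{K}}\langle y,f_k-\bar f\rangle^2 = 4\,y^{\rmT}AA^{\rmT}y,
\end{equation*}
where $A$ is the $N\times|\mathcal{K}|$ matrix whose columns are the centered fingerprints $f_k-\bar f$. This is manifestly nonnegative and vanishes at $y=0$, so $g(s)=0$; it remains only to show that the minimum $0$ is attained nowhere else.

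The kernel of $AA^{\rmT}$ equals the kernel of $A^{\rmT}$, which is the set of $y\in\mathbb{R}^N$ orthogonal to every $f_k-\bar f$. I would then translate the hyperplane hypothesis: the real fingerprints $\{f_k\}_{k\in\mathcal{K}}$ lie in a common hyperplane iff there exist a nonzero $n\in\mathbb{R}^N$ and a scalar $c$ with $\langle f_k,n\rangle=c$ for every $k\in\mathcal{K}$, which (subtracting the mean over $k$) is equivalent to the existence of a nonzero $n$ with $\langle f_k-\bar f,n\rangle=0$ for all $k$. Thus the hypothesis is exactly $\ker A^{\rmT}=\{0\}$, equivalently $\mathrm{span}\{f_k-\bar f:k\in\mathcal{K}\}=\mathbb{R}^N$. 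Consequently $AA^{\rmT}$ is positive definite, $g(s+y)>0$ for every $y\neq 0$, and $s$ is the unique minimizer of $g$.

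I do not anticipate a serious obstacle: the whole argument is a two-line algebraic reduction plus the elementary reformulation of the hyperplane condition. The only point worth stating carefully is that the equal-norm assumption is used to kill the $\|f_k\|^2$ terms in the inner square; without it, the identity for the bracketed expression would acquire an extra constant and the minimizer would shift.
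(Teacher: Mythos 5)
Your proof is correct and takes essentially the same route as the paper's: both hinge on expanding $\|x-(s+f_k)\|^2=\|x-s\|^2-2\langle x-s,f_k\rangle+\|f_k\|^2$, using the equal-norm hypothesis to cancel the $\|f_k\|^2$ terms, and observing that $g$ vanishes only when $\langle x-s,f_k\rangle$ is constant in $k$, which the no-common-hyperplane assumption forbids for $x\neq s$. Your packaging of this as the explicit positive semidefinite form $4\,y^{\rmT}AA^{\rmT}y$ with the centered fingerprints is a slightly more quantitative presentation of the same argument, but the underlying ideas coincide.
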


\begin{proof}
Note that $g(x)\geq0$, with equality precisely when $\|x-(s+f_k)\|^2$ is constant over $k\in\mathcal{K}$.
Since the $f_k$'s have equal norm, $g(s)=0$. To show that this minimum is unique, suppose $g(x)=0$ for some $x \neq s$. This implies that the values $\| x - (s+f_k)\|^2$ are constant, with each being equal to their average.  Moreover, since 
\begin{equation*}
\|x-(s+f_k)\|^2=\|x-s\|^2-2\langle x-s,f_k\rangle+\|f_k\|^2,
\end{equation*} we have that $\langle x-s, f_k \rangle$ is constant over $k \in \mathcal{K}$, contradicting the assumption that the fingerprints $\{f_k\}_{k\in\mathcal{K}}$ do not lie in a common hyperplane.
%
%
%
\end{proof}

\subsection{Construction of ETFs}

Having established that deterministic RIP constructions are just as good as random constructions for fingerprint design, we now consider a particular method for constructing ETFs.
Note that ETFs are notoriously difficult to construct in general, but a relatively simple approach was recently introduced in~\cite{fickus2010steiner}.  
The approach uses a tensor-like combination of a Steiner system's adjacency matrix and a regular simplex, and it is general enough to construct infinite families of ETFs.   
We illustrate the construction with an example:

\begin{example}[see \cite{fickus2010steiner}]
\label{ex:ETF} 

To construct an ETF, we will use a simple class of Steiner systems, namely $(2,2,v)$-Steiner systems, and Hadamard matrices with $v$ rows.  
In particular, $(2,2,v)$-Steiner systems can be thought of as all possible pairs of a $v$-element set \cite{van2001course}, while a Hadamard matrix is a square matrix of $\pm1$'s with orthogonal rows \cite{van2001course}.  

In this example, we take $v=4$.  
The adjacency matrix of the $(2,2,4)$-Steiner system has $\binom{4}{2} = 6$ rows, each indicating a distinct pair from a size-$4$ set:
\begin{equation}
\label{eq:steiner_mat}
A=\begin{bmatrix}+&+&&\\+&&+&\\+&&&+\\&+&+&\\&+&&+\\&&+&+\end{bmatrix}.
\end{equation}  
To be clear, we use ``$+/-$'' to represent $\pm 1$ and an empty space to represent a $0$ value.  
Also, one example of a $4 \times 4 $ Hadamard matrix is
\begin{equation} 
\label{eq:had_mat}
H=\begin{bmatrix}+&+&+&+\\+&-&+&-\\+&+&-&-\\+&-&-&+\end{bmatrix}.
\end{equation} 
We now build an ETF by replacing each of the ``$+$'' entries in the adjacency matrix \eqref{eq:steiner_mat} with a row from the Hadamard matrix \eqref{eq:had_mat}.  
This is done in such a way that for each column of $A$, distinct rows of $H$ are used; in this example, we use the second, third, and fourth rows of $H$.  
After performing this procedure, we normalize the columns, and the result is a real ETF $F$ of dimension $N=6$ with $M=16$ fingerprints (see Figure~\ref{fig:etf_construction}).  
\end{example}  

We will use this method of ETF construction for simulations in Section~\ref{sec:simulations}.  Specifically, our simulations will let the total number of fingerprints $M$ range from $2N$ to $7N$, a significant increase from the conventional $N+1$ simplex and $N$ orthogonal fingerprints.  Note, however, that unlike simplex and orthogonal fingerprints, there are limitations to the dimensions that admit real ETFs.  For instance, the size of a Hadamard matrix is necessarily a multiple of $4$, and the existence of Steiner systems is rather sporadic.  Fortunately, identifying Steiner systems is an active area of research, with multiple infinite families already characterized and alternative construction methods developed \cite{fickus2010steiner, ccrwest}.

\section{Error Analysis} \label{sec:error_analysis}

\subsection{Analysis of type I and type II errors}

We now investigate the worst case errors involved with using ETF fingerprint design and focused correlation detection under linear averaging attacks.  Recall that the worst-case type I error  probability \eqref{eq:cq:worst_case_error_1} is the probability of falsely accusing the most guilty-looking innocent user.  Also recall that the worst-case type II error  probability \eqref{eq:cq:worst_case_error_2} is the probability of not accusing the most vulnerable guilty user.  

\begin{thm}
\label{thm:cq:linear_attack_against_corr_detector}
Suppose the fingerprints $F=\{f_m\}_{m=1}^M$ form an equiangular tight frame.
Then the worst-case type I and type II error probabilities, \eqref{eq:cq:worst_case_error_1} and \eqref{eq:cq:worst_case_error_2}, satisfy
\begin{align*}
\mathrm{P}_\mathrm{I}(F,\tau,\alpha)&\leq Q\bigg[\frac{\gamma}{\sigma}\big(\tau-\mu\big)\bigg],\\
\mathrm{P}_\mathrm{II}(F,\tau,\alpha)&\leq Q\bigg[\frac{\gamma}{\sigma}\Big(\big((1+\mu)\max_{k\in\mathcal{K}}\alpha_k-\mu\big)-\tau\Big)\bigg],
\end{align*}
where $Q(x) := \frac{1}{\sqrt{2\pi}}\int_{x}^{\infty} e^{-u^2/2} du$ and $\mu=\sqrt{\frac{M-N}{N(M-1)}}$.
\end{thm}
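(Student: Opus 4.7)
The plan is to expand $T_m(z) = \frac{1}{\gamma^2}\langle z, f_m\rangle$ using $z = \sum_{k\in\mathcal{K}}\alpha_k f_k + \epsilon$, thereby separating the deterministic contribution (a weighted sum of pairwise fingerprint inner products) from the noise contribution $W := \frac{1}{\gamma^2}\langle \epsilon, f_m\rangle$. Since $\epsilon$ has i.i.d.\ $\mathcal{N}(0,\sigma^2)$ entries and $\|f_m\|^2 = \gamma^2$, a one-line variance computation gives $W \sim \mathcal{N}(0,\sigma^2/\gamma^2)$, so tail probabilities of $W$ translate directly into values of $Q(\gamma\cdot/\sigma)$.

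For the type I bound, consider an innocent user $m \notin \mathcal{K}$. Every cross inner product $\langle f_k, f_m\rangle$ with $k \in \mathcal{K}$ is an off-diagonal ETF entry of magnitude $\gamma^2\mu$, so together with $\alpha_k \geq 0$ and $\sum_{k\in\mathcal{K}}\alpha_k = 1$, the deterministic part of $T_m(z)$ is bounded in absolute value by $\mu$. Hence $T_m(z) \leq \mu + W$ in the worst case, and
\begin{equation*}
\mathrm{Prob}[T_m(z) \geq \tau \,|\, H_0(m)] \leq \mathrm{Prob}[W \geq \tau - \mu] = Q\!\left(\frac{\gamma}{\sigma}(\tau-\mu)\right).
\end{equation*}
Because this holds uniformly over $m \notin \mathcal{K}$ and over $\mathcal{K}$, it bounds $\mathrm{P}_\mathrm{I}(F,\tau,\alpha)$ as claimed.

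For the type II bound, fix $m \in \mathcal{K}$ and isolate the self-inner-product:
\begin{equation*}
T_m(z) = \alpha_m + \frac{1}{\gamma^2}\sum_{k \in \mathcal{K}\setminus\{m\}}\alpha_k \langle f_k,f_m\rangle + W.
\end{equation*}
Bounding the cross terms as above, now with $\sum_{k \neq m}\alpha_k = 1-\alpha_m$, yields $T_m(z) \geq (1+\mu)\alpha_m - \mu + W$, whence
\begin{equation*}
\mathrm{Prob}[T_m(z) < \tau \,|\, H_1(m)] \leq Q\!\left(\frac{\gamma}{\sigma}\big((1+\mu)\alpha_m - \mu - \tau\big)\right).
\end{equation*}
Minimizing the left side over $m \in \mathcal{K}$ produces $\mathrm{P}_\mathrm{m}$; because $Q$ is strictly decreasing, this minimum is attained at the $m$ that maximizes $\alpha_m$, which is precisely how $\max_{k\in\mathcal{K}}\alpha_k$ enters the stated bound. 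Taking $\max_{\mathcal{K}}$ preserves the inequality and yields the second assertion.

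The main obstacle is simply careful bookkeeping of signs and variances; no deep estimate is needed. The subtle point is that the worst cases for the two bounds require the cross-term deviation to have opposite signs (positive for type I, negative for type II), and applying the triangle inequality uniformly yields the cleanest statement in terms of $\mu$. A useful byproduct of this analysis, foreshadowed in the text, is that the type II bound is weakest when $\max_{k\in\mathcal{K}}\alpha_k$ is minimized, which under $\sum_{k\in\mathcal{K}}\alpha_k = 1$ and $\alpha_k \geq 0$ forces the equal-weight attack $\alpha_k = 1/|\mathcal{K}|$, confirming that equal weights are the worst attack from the collusion's perspective.
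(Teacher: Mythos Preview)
Your proposal is correct and follows essentially the same approach as the paper: decompose $T_m(z)$ into a deterministic cross-correlation term plus a Gaussian noise term $W\sim\mathcal{N}(0,\sigma^2/\gamma^2)$, bound the deterministic part via the ETF equiangularity $|\langle f_k,f_m\rangle|=\gamma^2\mu$ together with $\alpha_k\geq0$, $\sum_k\alpha_k=1$, and then translate to $Q$-function bounds, finally taking the appropriate $\max_{\mathcal{K}}$/$\min_{m}$. Your handling of the $\min_{m\in\mathcal{K}}$ step (minimum of the upper bounds is still an upper bound on the minimum, attained at $\arg\max_m\alpha_m$) matches the paper's argument exactly.
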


\begin{proof}
Under hypothesis $H_0(m)$, the test statistic for the detector \eqref{eq:focused_detection} is 
\begin{eqnarray}
T_z(m) \!\!\!&=&\!\!\! \frac{1}{\gamma^2} \innerprod{\sum_{n \in \mK} \alpha_n f_n + \epsilon}{f_m} \nonumber \\
&=&\!\!\! \sum_{n \in \mK} \alpha_n (\pm \mu) + \epsilon ' , \nonumber
\end{eqnarray}
where $\epsilon '$ is the projection of noise $\epsilon / \gamma$ onto the normed vector $f_m / \gamma$, and, due to symmetry of the variance of $\epsilon$ in all dimensions, $\epsilon ' \sim \mathcal{N}(0,\sigma^2 / \gamma^2)$. Thus, under hypothesis $H_0(m)$, $T_z(m) \sim \mathcal{N}(\sum_{n \in \mK} \alpha_n (\pm \mu),\sigma^2 / \gamma^2)$. We can subtract the mean and divide by the standard deviation to obtain:
\begin{eqnarray} 
\!\!\!\!\text{Prob} \left[  T_m(z) \geq \tau | H_0(m) \right] \!\!\!\!
&=&\!\!\!\!  Q\! \left( \! \frac{\gamma}{\sigma} \! \left[\tau - \! \left(\sum\limits_{n \in \mathcal{K}} \!\! \alpha_n (\pm \mu) \right) \!  \right]\! \right)  \label{eq:err_prob:1a} \nonumber \\
&\leq&\!\!\!\! Q (\frac{\gamma}{\sigma}(\tau- \mu)). \label{eq:err_prob:1b} 
\end{eqnarray}  Note that $Q(x)$ is a decreasing function.  The bound \eqref{eq:err_prob:1b} is obtained by setting all of the coefficients of the coherence to be positive.

Likewise, under hypothesis $H_1(m)$, the test statistic is 
\begin{eqnarray*}
\!\!\!\!T_z(m) \!\!\!&=&\!\!\!\frac{1}{\gamma^2} \! \innerprod{\sum_{n \in \mK} \alpha_n f_n \!+\! \epsilon}{f_m\!\!} \nonumber \\
&=&\!\!\!  \alpha_m\! +\!\!\!\! \sum_{n \in \mK \backslash \{m\}} \!\!\!\! \alpha_n (\pm \mu) \!+\! \epsilon ' .
\end{eqnarray*}
Thus, under hypothesis $H_1(m)$, \[T_z(m) \sim \mathcal{N}\left(\alpha_m +\!\!\!\!\! \sum_{n \in \mK \backslash \{m\}}\!\!\!\!\! \alpha_n (\pm \mu), \frac{\sigma^2}{\gamma^2}\right).\] Since $1 - Q(x) = Q(-x)$, the type II error probability can be bounded as \begin{eqnarray*} \!\text{Prob} \left[  T_m(z) < \tau \right] \!\!\!\!&=&\!\!\!\!  Q\!\! \left( \!\!- \frac{\gamma}{\sigma} \left[\! \tau - \!\!\left( \!\! \alpha_m +\!\!\!\!\! \sum\limits_{n \in \mathcal{K} \backslash \{m\}} \!\!\!\!\! \alpha_n (\pm \mu) \! \right) \! \right] \! \right)  \nonumber \\ 
&\leq&\!\!\!\! Q\! \left(\frac{\gamma}{\sigma}([\alpha_m (1+ \mu) - \mu] - \tau) \right). \end{eqnarray*}

We can now evaluate the worst-case errors \eqref{eq:cq:worst_case_error_1} and \eqref{eq:cq:worst_case_error_2}.  For type I errors, with $m \notin \mK$:
\begin{eqnarray*}
\mathrm{P}_\mathrm{I}(F,\tau,\alpha)\!\!\!\!&=&\!\!\!\!\max_\mathcal{K}\max_{m\not\in\mathcal{K}}\mathrm{P}_\mathrm{I}(F,m,\tau,\mathcal{K},\alpha) \\
&=&\!\!\!\!\max_\mathcal{K}\max_{m\not\in\mathcal{K}} \text{Prob} \left[  T_m(z) \geq \tau | H_0(m) \right] \\
&\leq&\!\!\!\! \max_\mathcal{K}\max_{m\not\in\mathcal{K}} Q (\frac{\gamma}{\sigma}(\tau- \mu)).\label{eq:err_prob:4a} \\
&=&\!\!\!\! Q (\frac{\gamma}{\sigma}(\tau- \mu)).\label{eq:wrst_err:1}
\end{eqnarray*}  For type II errors, \begin{eqnarray*}
\!\!\!\!\!\!\mathrm{P}_\mathrm{II}(F,\tau,\alpha)\!\!\!\!&=&\!\!\!\!\max_\mathcal{K}\min_{m\in \mK} \mathrm{P}_\mathrm{II}(F,m,\tau,\mK,\alpha) \nonumber \\
&=&\!\!\!\!\max_\mathcal{K}\min_{m\in \mK} \text{Prob} \left[  T_m(z) < \tau \right] \\
&\leq&\!\!\!\! \max_\mathcal{K}\min_{m\in \mK} Q(\frac{\gamma}{\sigma}([\alpha_m (1+ \mu) - \mu] - \tau)) \\
&=& \!\!\!\!  Q(\frac{\gamma}{\sigma}([\max_{m \in \mK}\alpha_m (1+ \mu) - \mu] - \tau)). \label{eq:wrst_err:2}
\end{eqnarray*} The last equation follows since once the $\alpha$'s are fixed, the actual coalition $\mK$ does not matter.
\end{proof}

We can further maximize over all possible weightings $\alpha$:
\begin{eqnarray*} \!\!\!\mathrm{P}_\mathrm{I}(F,\tau)\!\!\!\!&=&\!\!\!\!\max_{\alpha} \mathrm{P}_\mathrm{I}(F,\tau,\alpha) \\
&\leq &\!\!\!\! Q (\frac{\gamma}{\sigma}(\tau- \mu)) \\ 
\!\!\!\mathrm{P}_\mathrm{II}(F,\tau) \!\!\!\!\!
&=&\!\!\!\!\max_{\alpha} \mathrm{P}_\mathrm{II}(F,\tau,\alpha) \\
&\leq&\!\!\!\! \max_{\alpha} Q(\frac{\gamma}{\sigma}([\max_{m \in \mK}\alpha_m (1+ \mu) - \mu] - \tau)) \\
&=& \!\!\!\!   Q(\frac{\gamma}{\sigma}([\min_{\alpha}\max_{m \in \mK}\alpha_m (1\!+\! \mu) -\! \mu]\! -\! \tau))   \\
&=& \!\!\!\!   Q(\frac{\gamma}{\sigma}([\frac{1}{K} (1\!+\! \mu) -\! \mu]\! -\! \tau)).  
\end{eqnarray*}  Thus, the vector $\alpha$ which minimizes the value of its maximum element is the uniform weight vector.  This motivates the attackers' use of a uniformly weighted coefficient vector $\alpha$ to maximize the probability that none of the members will be caught.

\subsection{Minimax error analysis}
From a detection standpoint, an important criterion is the minimax error probability \eqref{eq:setup:minmax_err}.  The threshold $\tau$ trades off type I and type II errors.  Thus, there is a value of $\tau$, denoted by $\tau^*$, which minimizes the maximum of the probabilities of the two error types.  Since the bound for the type I error probability is independent of $\alpha$, and the bound for the type II error probability is maximized with a uniform weighting, assume this to be the case. 

\begin{thm}  The minmax probability of error \eqref{eq:setup:minmax_err} can be bounded as:
\begin{eqnarray} \label{eq:minmax_error_bounds}
 Q\bigg( \frac{d^*_{\mathrm{low}}}{2}  \bigg) \leq \mathrm{P}_\mathrm{minmax}(F,\alpha) \leq Q\bigg( \frac{d^*_{\mathrm{up}}}{2} \bigg),
 \end{eqnarray} where \begin{eqnarray*}
 d^*_{\mathrm{low}} \!\!\!&:=&\!\!\! \frac{ \sqrt{ \frac{M}{M-1}} \sqrt{N D_f}   }{ \sigma \sqrt{K (K-1)}} \\
 d^*_{\mathrm{up}} \!\!\!&:=&\!\!\! \!\frac{\sqrt{N D_f}}{\sigma K}\Big(1 - (2K - 1) \mu\Big).
\end{eqnarray*}
\end{thm}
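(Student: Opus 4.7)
The plan is to establish the two inequalities separately. The upper bound is a straightforward optimization built on Theorem~\ref{thm:cq:linear_attack_against_corr_detector}, while the lower bound requires a more delicate detection-theoretic argument.

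For the upper bound $\mathrm{P}_\mathrm{minmax}(F,\alpha)\leq Q(d^*_{\mathrm{up}}/2)$, I would start from the worst-case (uniform) $\alpha$ specialization of the preceding theorem, which gives $\mathrm{P}_\mathrm{I}(F,\tau)\leq Q(\tfrac{\gamma}{\sigma}(\tau-\mu))$ and $\mathrm{P}_\mathrm{II}(F,\tau)\leq Q(\tfrac{\gamma}{\sigma}(\tfrac{1+\mu}{K}-\mu-\tau))$. Since $Q$ is strictly decreasing, the maximum of two $Q$-values is minimized when their arguments coincide; setting $\tau^{*}-\mu=\tfrac{1+\mu}{K}-\mu-\tau^{*}$ yields $\tau^{*}=(1+\mu)/(2K)$. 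Substituting back gives the common argument $\gamma(1-(2K-1)\mu)/(2K\sigma)$, which, upon identifying $\gamma=\sqrt{ND_f}$, is exactly $d^*_{\mathrm{up}}/2$.

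For the lower bound $\mathrm{P}_\mathrm{minmax}(F,\alpha)\geq Q(d^*_{\mathrm{low}}/2)$, the key idea is that the focused correlation detector's minmax error over all coalitions is bounded below by the minmax error of the \emph{optimal} likelihood-ratio detector restricted to a two-point sub-problem. For any pair of noiseless forgeries $v_0,v_1\in\mathbb{R}^N$ with isotropic Gaussian noise of variance $\sigma^2$, this two-point minmax error equals $Q(\|v_1-v_0\|/(2\sigma))$. I would therefore pair each size-$K$ coalition $\mathcal{K}\ni m$ with $\mathcal{K}'=\mathcal{K}\setminus\{m\}$ (a valid element of $\neg\mathcal{G}_m^{(K)}$, since the guilty/not-guilty sets in Section~\ref{sec:geom_fig_merit} include all sizes up to $K$) and argue that some such $\mathcal{K}$ produces $\|\bar{f}_\mathcal{K}-\bar{f}_{\mathcal{K}\setminus\{m\}}\|\leq \gamma\sqrt{M/(K(K-1)(M-1))}$.

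To obtain this existence, I would compute the expectation of $\|\bar{f}_\mathcal{K}-\bar{f}_{\mathcal{K}\setminus\{m\}}\|^2$ as $\mathcal{K}$ ranges uniformly over size-$K$ sets containing $m$. Writing the difference as $\tfrac{1}{K(K-1)}\bigl((K-1)f_m-\sum_{k\in\mathcal{K}\setminus\{m\}}f_k\bigr)$ and expanding, the combinatorial inclusion probabilities together with the tight frame identity $\sum_k f_kf_k^{*}=(M/N)\gamma^2 I$ and the centering identity $\sum_k f_k=0$ (which holds for the Steiner-based ETFs described in Example~\ref{ex:ETF}) cause the signed cross terms to collapse to a single scalar, giving an expectation of exactly $\gamma^2 M/(K(K-1)(M-1))$. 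Since the minimum cannot exceed the expectation, a suitable $\mathcal{K}$ exists, and substituting its distance into the two-point lower bound produces $Q(d^*_{\mathrm{low}}/2)$. The main obstacle will be this last expectation calculation: the tight frame and centering identities must be invoked precisely, since what distinguishes $d^*_{\mathrm{low}}$ from the weaker coherence-based surrogate available from Theorem~\ref{thm:etf_fingerprints} is exactly the cancellation of the $\pm\mu$ cross terms achieved by averaging over $\mathcal{K}$.
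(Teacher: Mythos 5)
Your upper-bound argument is exactly the paper's: specialize Theorem~\ref{thm:cq:linear_attack_against_corr_detector} to uniform weights, equate the two $Q$-arguments to get $\tau^*=(1+\mu)/(2K)$, and substitute; that part is correct. The lower bound is where you and the paper part ways: the paper does not prove it, but simply cites the sphere-packing bound of \cite{kiyavash2007sphere}, so your two-point-testing-plus-averaging derivation is a genuinely different, self-contained route. Your expectation computation is also right as far as it goes: with equal norms and $\sum_k f_k=0$, the average of $\|\bar f_{\mathcal{K}}-\bar f_{\mathcal{K}\setminus\{m\}}\|^2$ over coalitions is $\gamma^2M/(K(K-1)(M-1))$ (only centering and equal norms are needed for this, not tightness), and since the minimum is at most the mean, a confusable pair of coalitions exists.

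There is, however, a genuine gap in lifting the two-point bound to the paper's minmax quantity. The worst-case type II error \eqref{eq:cq:worst_case_error_2} is $\max_{\mathcal{K}}\min_{m\in\mathcal{K}}\mathrm{P}_\mathrm{II}(F,m,\tau,\mathcal{K},\alpha)$ --- a \emph{minimum} over the coalition's members. Your argument produces one favorable pair $(m,\mathcal{K})$ for which $\max\bigl\{\mathrm{P}_\mathrm{I}(F,m,\tau,\mathcal{K}\setminus\{m\},\cdot),\,\mathrm{P}_\mathrm{II}(F,m,\tau,\mathcal{K},\cdot)\bigr\}\geq Q(d/(2\sigma))$. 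In the case where the second term is the large one, this lower-bounds $\mathrm{P}_\mathrm{II}$ for \emph{that particular} $m$, hence only $\max_{m\in\mathcal{K}}\mathrm{P}_\mathrm{II}(F,m,\tau,\mathcal{K},\cdot)$; the quantity $\mathrm{P}_\mathrm{m}(F,\tau,\mathcal{K},\alpha)$ could still be tiny because some other colluder $m'\in\mathcal{K}$ is easy to catch. To close this you would need a single coalition all of whose members are simultaneously confusable, i.e., $\|\bar f_\mathcal{K}-\bar f_{\mathcal{K}\setminus\{m\}}\|$ small for \emph{every} $m\in\mathcal{K}$, which a first-moment existence argument does not deliver. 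Two smaller caveats: the centering identity $\sum_k f_k=0$ is a feature of the Steiner construction of Example~\ref{ex:ETF}, not of ETFs in general, so your argument would prove the bound only for that subclass; and the two-point minimax error $Q(\|v_1-v_0\|/(2\sigma))$ should be justified against arbitrary decision rules so that it dominates every threshold choice $\tau$ of the correlation detector.
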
  

Note that for orthogonal and simplex fingerprints, the minmax errors are both of the form  \begin{eqnarray*} \mathrm{P}_\mathrm{minmax}(F,\alpha) \!\!\!&=&\!\!\! Q \left( \frac{d^*(K)}{2} \right).  \end{eqnarray*} For orthogonal fingerprints\cite{kiyavash2009regular}, \[ d^*(K) = \frac{\sqrt{ N D_f }}{\sigma K}, \] which is better than $d^*_{\mathrm{low}}(K)$ (the Q function is decreasing).  For simplex fingerprints, $d^*(K)$ is slightly better than both \cite{kiyavash2009regular}: \[ d^*(K) = \frac{\sqrt{ N D_f }}{\sigma K} \frac{M}{M-1}. \]

\begin{proof} The lower bound is the sphere packing lower bound \cite{kiyavash2007sphere}.  For the upper bound, 
\begin{eqnarray*}
&& \!\!\!\!\!\!\!\!\!\!\!\!\!\!\!\!\!\!\!\!\!  P_e(F, \tau, \alpha) = \max \{   P_{I}(F, \tau, \alpha), P_{II}(F, \tau, \alpha) \} \nonumber \\ 
&&\!\!\!\!\!\!\!\!\!\!\!\!\!\! \leq \max \{ Q(\frac{\gamma}{\sigma}(\tau \!-\! \mu)), Q(\frac{\gamma}{\sigma}([\frac{1}{K} (1\!+\! \mu) \!-\! \mu] \! - \!\tau)) \}. 
\end{eqnarray*} Since the test statistic $T_z(m)$ is normally distributed with the same variance under either of the hypotheses $H_0(m)$ and $H_1(m)$, the value of $\tau$ that minimizes this upper bound is the average of the means $\mu$ and $\frac{1+\mu}{K}-\mu$, namely $\smash{\tau^*:=\frac{1+\mu}{2K}}$. Using this $\tau^*$ and recalling \eqref{eq:energy}, we have \begin{eqnarray*}
\mathrm{P}_\mathrm{minmax}(F, \alpha)\!\!\! &=&\!\!\!\min_{\tau} \mathrm{P}_\mathrm{e}(F,\tau,\alpha) =  \mathrm{P}_\mathrm{e}(F,\tau^*,\alpha) \\
&\leq& \!\!\!Q\bigg(\frac{\gamma}{\sigma}\big(\tau^*-\mu\big)\!\bigg)  \\
&=& \!\!\!Q\bigg(\!\frac{\sqrt{N D_f}}{2\sigma K}\Big(1 - (2K - 1) \mu\Big)\!\bigg)  \label{eq.minimax.upbnderr} \\
&=&\!\!\! Q( d^*_{\mathrm{up}}(K) /2). \end{eqnarray*} \end{proof}

Consider the regime where $N$ is large, $M$ grows linearly or faster than $N$, and $\WNR$ is constant (in particular $0$, so $D_f = \sigma^2$).  Then $\mu \approx 1/\sqrt{N}$,  
\begin{eqnarray*}
 \quad d^*_{\mathrm{low}} \approx \frac{ \sqrt{ N }}{ K  }, \quad \text{and} \quad d^*_{\mathrm{up}}
 \approx \frac{ \sqrt{ N }}{ K  } \!\! \left(\!\! 1\! - \frac{2 K}{\sqrt{N}} \! \right)=  \frac{ \sqrt{ N }}{ K  } - 2.
\end{eqnarray*}  If $K \ll \sqrt{N}$, then both bounds go to infinity so $\mathrm{P}_\mathrm{minmax}(F,\alpha) \to 0$.  Also, the geometric figure of merit  \eqref{eqn:dist_guilt_notguilt} then behaves as 
\begin{eqnarray*}
\udmk \approx \frac{1}{K} \approx \sqrt{N} d^*_{\mathrm{up}}.
\end{eqnarray*}  If $K$ is proportional to $\sqrt{N}$, then $\mathrm{P}_\mathrm{minmax}(F,\alpha)$ is bounded away from 0.

We can also compute the error exponent for this test: \[\mathrm{e}(F,\tau^*,\alpha):= -\!\!\lim_{N\rightarrow\infty}\frac{1}{N}\ln\mathrm{P}_\mathrm{e}(F,\tau^*,\alpha).\] 
\begin{cor} If $M \gg N \gg K^2$, then the error exponent is
\begin{eqnarray}
\mathrm{e}(F,\tau^*,\alpha) = \frac{1}{8 K^2}. \label{eq.error.exponent}
\end{eqnarray} \end{cor}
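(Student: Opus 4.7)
The plan is to sandwich $\mathrm{P}_\mathrm{e}(F,\tau^*,\alpha)$ between the two bounds from the preceding theorem,
\[
Q\!\left(\tfrac{d^*_\mathrm{low}}{2}\right)\;\le\;\mathrm{P}_\mathrm{e}(F,\tau^*,\alpha)\;\le\; Q\!\left(\tfrac{d^*_\mathrm{up}}{2}\right),
\]
and to extract the exponent from each side using the standard Gaussian tail estimate $-\ln Q(x)=x^{2}/2+O(\ln x)$ as $x\to\infty$ (Mills' ratio). Applying $-\tfrac{1}{N}\ln(\cdot)$ to both sides squeezes
\[
\frac{(d^*_\mathrm{up})^{2}}{8N}+o(1)\;\le\;-\frac{1}{N}\ln\mathrm{P}_\mathrm{e}\;\le\;\frac{(d^*_\mathrm{low})^{2}}{8N}+o(1),
\]
so the entire task reduces to computing the limits of $(d^*_\mathrm{up})^{2}/(8N)$ and $(d^*_\mathrm{low})^{2}/(8N)$ in the prescribed regime.

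Next I would substitute the Welch value $\mu=\sqrt{(M-N)/(N(M-1))}$. Under $M\gg N$, $N\mu^{2}\to 1$, i.e.\ $\mu\sim N^{-1/2}$, and $\sqrt{M/(M-1)}\to 1$. Under $N\gg K^{2}$, $K/\sqrt{N}\to 0$, so $(2K-1)\mu\to 0$. Adopting the WNR$=0$ convention used in the preceding discussion, so that $D_f/\sigma^{2}=1$, direct substitution gives
\[
\frac{(d^*_\mathrm{up})^{2}}{8N}=\frac{(1-(2K-1)\mu)^{2}}{8K^{2}}\;\longrightarrow\;\frac{1}{8K^{2}},
\qquad
\frac{(d^*_\mathrm{low})^{2}}{8N}=\frac{M}{(M-1)}\,\frac{1}{8K(K-1)}\;\longrightarrow\;\frac{1}{8K(K-1)}.
\]

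Finally, one observes that the hypothesis $N\gg K^{2}$ forces $K\to\infty$, so
\[
\frac{1}{K(K-1)}=\frac{1}{K^{2}}\Bigl(1+O(1/K)\Bigr)\;\longrightarrow\;\frac{1}{K^{2}}.
\]
Both the upper and lower estimates on $-\tfrac{1}{N}\ln\mathrm{P}_\mathrm{e}$ therefore converge to the common limit $\tfrac{1}{8K^{2}}$, which establishes \eqref{eq.error.exponent}.

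The main technical obstacle is this last reconciliation: the upper bound carries a clean $K^{2}$ while the lower bound carries $K(K-1)$, and the two agree only to leading order. One must be careful that the regime $N\gg K^{2}$ does allow $K$ to be taken large, so that the $K/(K-1)$ factor can be absorbed into $1+o(1)$. Beyond this observation, the remaining steps are a routine application of the $Q$-function tail asymptotic and algebraic substitutions into $d^*_\mathrm{up}$ and $d^*_\mathrm{low}$.
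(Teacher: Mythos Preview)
Your approach matches the paper's two-sentence proof exactly: apply the asymptotic $\ln Q(t)\sim -t^{2}/2$ to both bounds in \eqref{eq:minmax_error_bounds} and observe that the resulting expressions are asymptotically equivalent. You have in fact been more careful than the paper in flagging the $1/(8K(K-1))$ versus $1/(8K^{2})$ discrepancy coming from the lower bound; note only that $N\gg K^{2}$ does not by itself force $K\to\infty$ (fixed $K$ is consistent with it), so this reconciliation is a genuine looseness in the corollary as stated---one the paper's own proof shares rather than resolves.
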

\begin{proof}
The proof follows by applying the asymptotic equality $\ln Q(t) \sim -\frac{t^2}{2}$ as $t\to \infty$ to the bounds in \eqref{eq:minmax_error_bounds}.  The bounds are asymptotically equivalent.
\end{proof} Note that this error exponent is the same as in the simplex case \cite{kiyavash2009regular}.  As $K \to \infty$, the error exponent \eqref{eq.error.exponent} goes to zero.

\newcommand{\Pd}{\mathrm{P}_{\mathrm{d}}}
\newcommand{\Pfa}{\mathrm{P}_{\mathrm{fa}}}

\section{Simulations} \label{sec:simulations}

In Section~\ref{sec:error_analysis}, the worst-case error probabilities were analyzed, where the worst case was over all collusions.  Here we investigate average case behavior.  We examine the probability of detecting at least one guilty user, $\mathrm{P}_\mathrm{d}$ \eqref{eq:cq:P_d}, as a function of $K$ with the false-alarm $
\mathrm{P}_\mathrm{fa}$ \eqref{eq:cq:P_fa} fixed below a threshold.  The threshold can be interpreted as the (legally) allowable limit for probability of false accusation.  We compare the average case performance of ETF fingerprints, simplex fingerprints \cite{kiyavash2009regular}, and orthogonal fingerprints \cite{cox1997secure} for four dimension sizes $N \in \{195, 651, 2667, 8128\}$.  The ETF construction described in Example~\ref{ex:ETF} was used.  The results demonstrate that ETF fingerprints perform almost as well as both orthogonal and simplex fingerprints, while accommodating several times as many users.  We now describe the design of the simulations.

\subsection{Design}

For a fixed signal dimension size $N$, ETF, orthogonal, and simplex fingerprints were created.  The ETF fingerprint design was constructed using the method shown in Example~\ref{ex:ETF}. For $N = 195$, a $(2,7,91)$-Steiner system was used \cite{ccrwest}, yielding $M=1456$ fingerprints.  For $N=651$ and $N=2667$, the Steiner systems were constructed using projective geometry, with $M=2016$ and $M=8128$ respectively \cite{fickus2010steiner}.  For $N = 8128$, a $(2,2,2^7)$-Steiner system was used, giving $M = 16,384$ fingerprints \cite{fickus2010steiner}.

The orthogonal fingerprint design was constructed using an identity matrix.  For actual embedding, the orthogonal fingerprints should be randomly rotated to ensure difficulty in removing them.  One method to achieve this is to randomly generate i.i.d. Gaussian vectors in $N$ dimensional space, orthogonalize them, then scale them to have the same energy \cite{cox1997secure}.  For detection purposes, however, rotations of the basis vectors are inconsequential.

The simplex fingerprint design was constructed using the following method of sequentially fixing the values of the fingerprints in each dimension.  The vectors of a regular simplex are equidistant from the center, having the same power, and have inner products equal to $-\frac{1}{N}$ \cite{munkres1984elements}.  Letting the first vector have a one in the first row and zeros in the other rows forces every other vector's first element to be $-\frac{1}{N}$ to satisfy the inner product constraint. For the second vector, choose its second element to satisfy the power constraint and set the elements of the remaining rows to zero.  Using the inner product constraint, we can find the value of the second element for all other vectors.  Repeating these steps constructs a regular simplex.


\begin{figure*}
\centering
\begin{tabular}{cc}
\begin{minipage}{\columnwidth}
\includegraphics[height=.3\textheight]{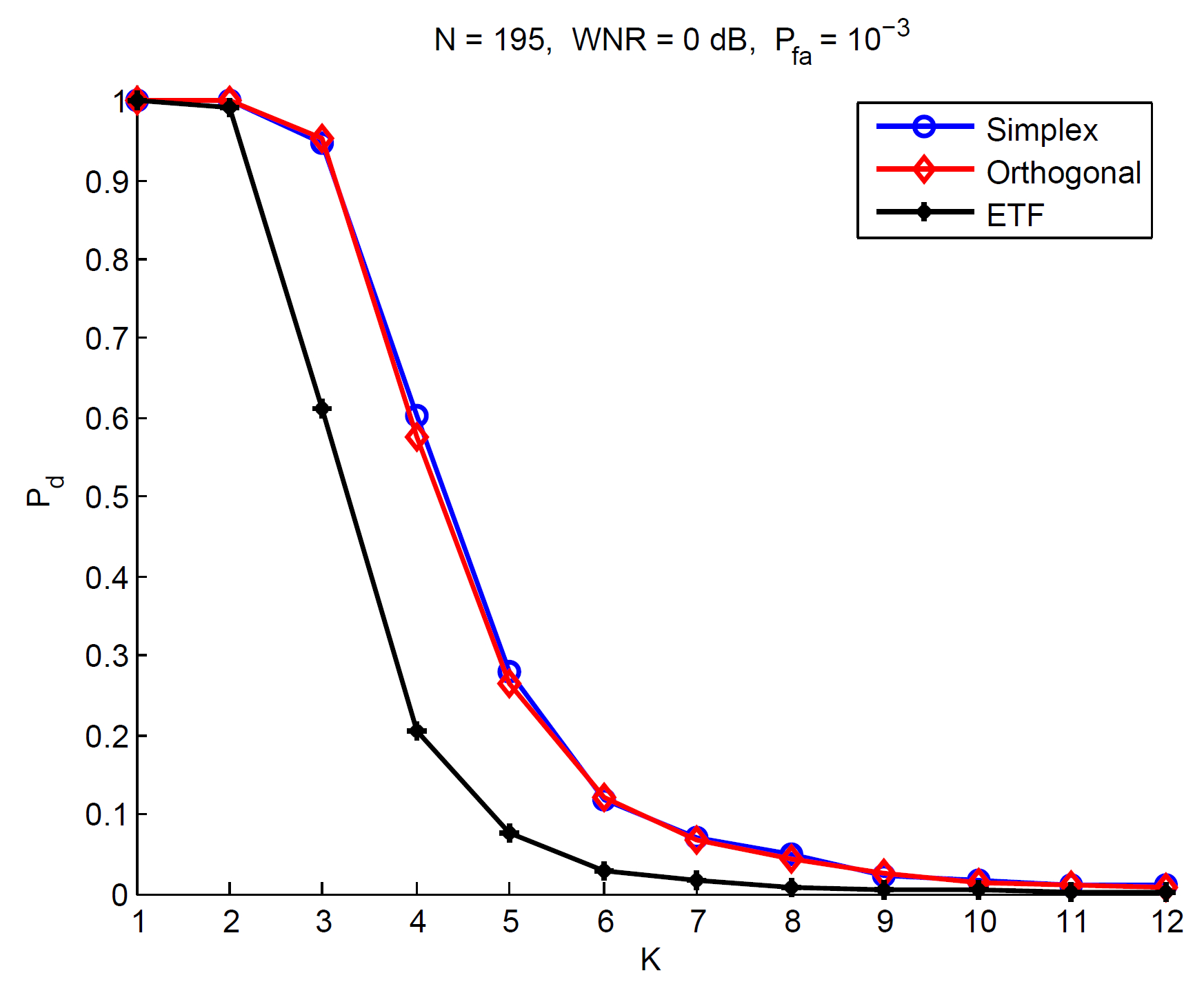}
\caption{\label{fig:sim:N_195} A plot of the probability of detecting at least one colluder ($\Pd$) as a function of the number of colluders ($K$).  The threshold $\tau$ is picked to be the minimum threshold to fix $\Pfa \leq 10^{-3}$.  The WNR is $0$ dB and $N = 195$.  The (maximum) number of fingerprints were $195$ for the orthogonal, $196$ for the simplex, and $1456$ for the ETF construction.}
\end{minipage}
&
\begin{minipage}{\columnwidth}
\centering
\includegraphics[height=.3\textheight]{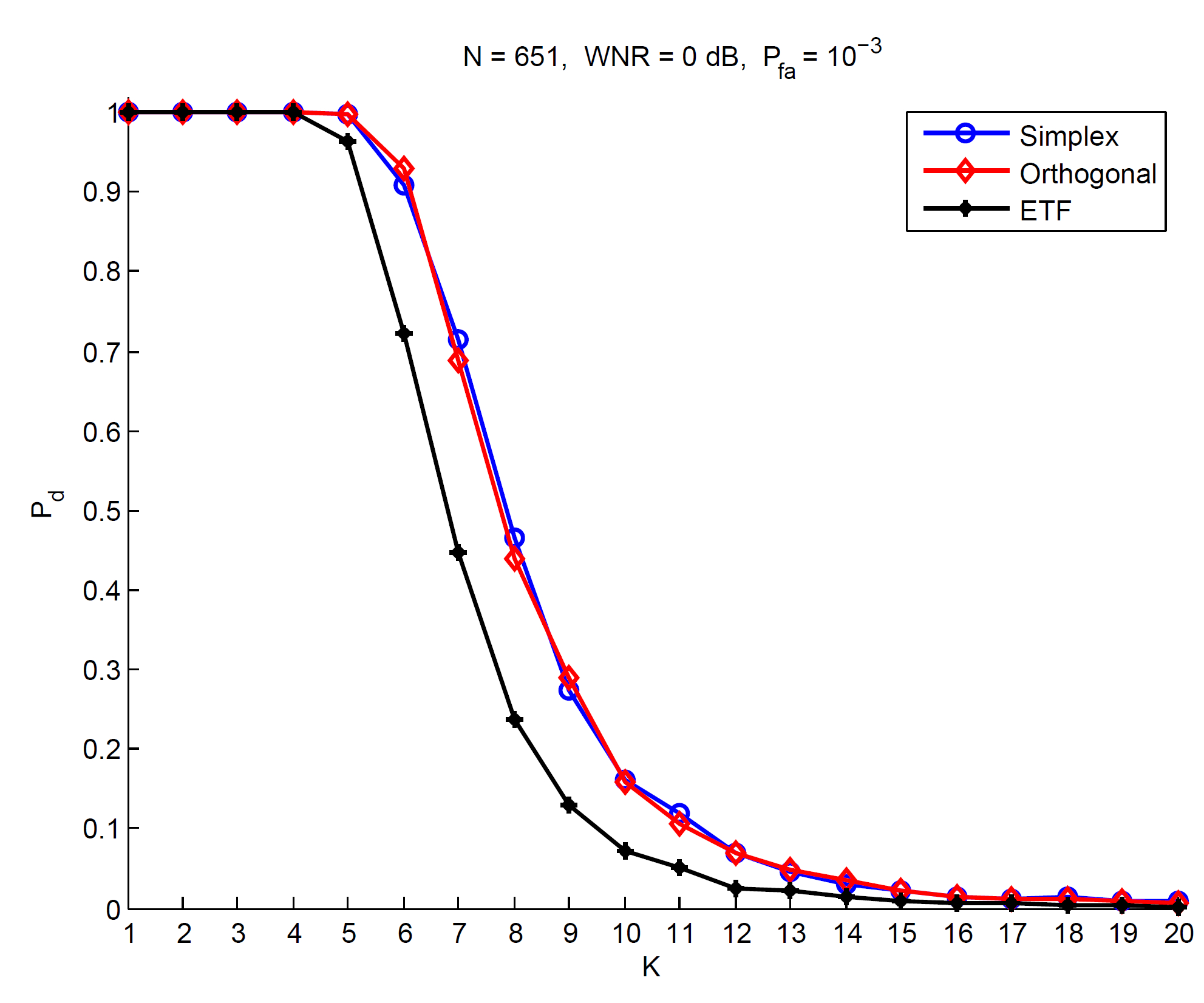}
\caption{\label{fig:sim:N_651} A plot of the probability of detecting at least one colluder ($\Pd$) as a function of the number of colluders ($K$).  The threshold $\tau$ is picked to be the minimum threshold to fix $\Pfa \leq 10^{-3}$.  The WNR is $0$ dB and $N = 651$. The (maximum) number of fingerprints were $651$ for the orthogonal, $652$ for the simplex, and $2016$ for the ETF construction.}
\end{minipage} \\ \\
\begin{minipage}{\columnwidth}
\includegraphics[height=.3\textheight]{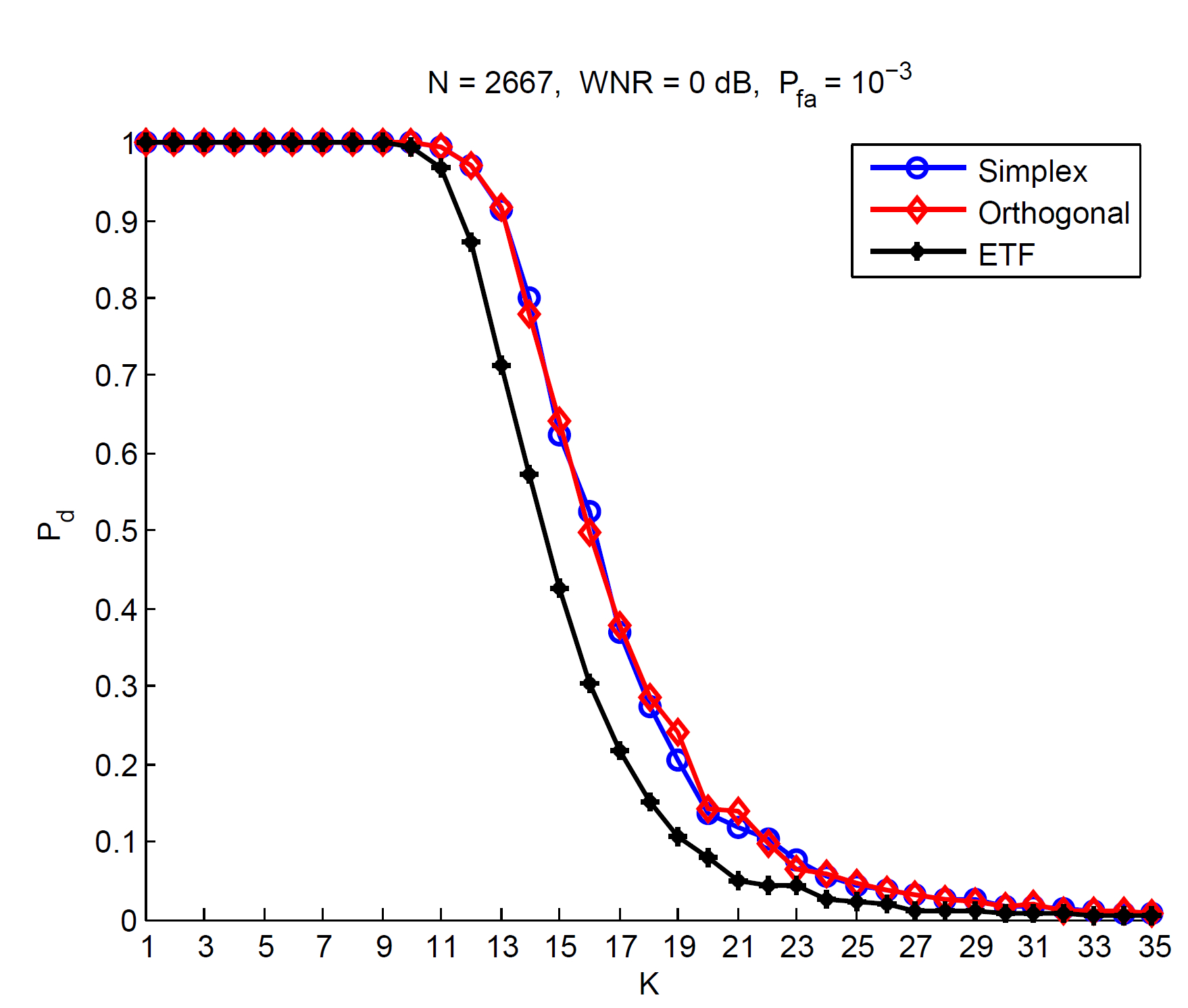}
\caption{\label{fig:sim:N_2667} A plot of the probability of detecting at least one colluder ($\Pd$) as a function of the number of colluders ($K$).  The threshold $\tau$ is picked to be the minimum threshold to fix $\Pfa \leq 10^{-3}$.  The WNR is $0$ dB and $N = 2667$. The (maximum) number of fingerprints were $2667$ for the orthogonal, $2668$ for the simplex, and $8128$ for the ETF construction.}
\end{minipage}
&
\begin{minipage}{\columnwidth}
\includegraphics[height=.3\textheight]{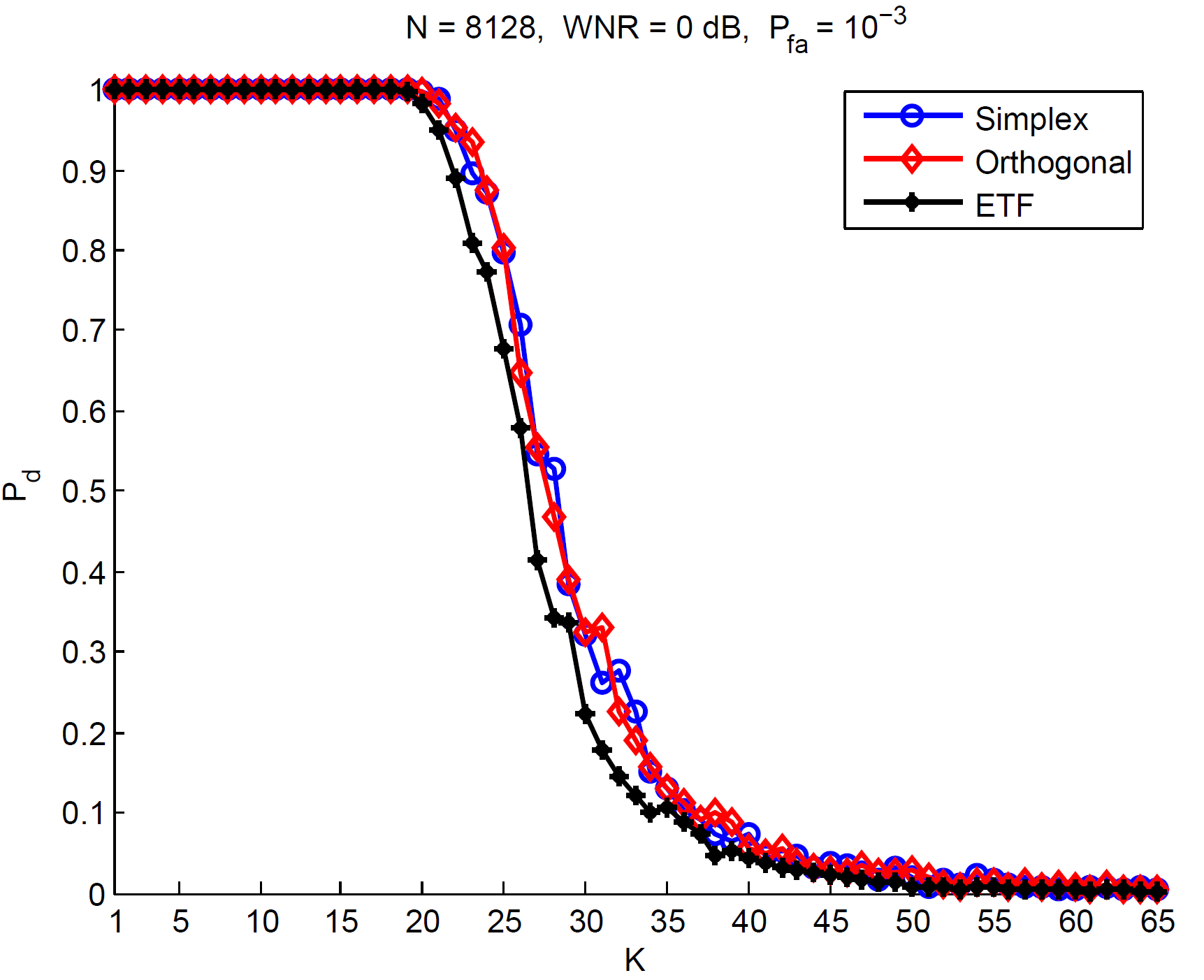}
\caption{\label{fig:sim:N_8128} A plot of the probability of detecting at least one colluder ($\Pd$) as a function of the number of colluders ($K$).  The threshold $\tau$ is picked to be the minimum threshold to fix $\Pfa \leq 10^{-3}$.  The WNR is $0$ dB and $N = 8128$.  The (maximum) number of fingerprints were $8128$ for the orthogonal, $8129$ for the simplex, and $16,384$ for the ETF construction.}
\end{minipage}
\end{tabular}
\end{figure*}

Linear collusion attacks were simulated separately for the different designs and collusion sizes $K$.  For each attack, $K$ of the $M$ fingerprints were randomly chosen and uniformly averaged.  Next, an i.i.d. Gaussian noise vector was added with per-sample noise power $\sigma^2 = D_f$, corresponding to a $\WNR$ \eqref{eq:wnr} of 0 dB \cite{wang2005anti}.  

The test statistics $T_z(m)$ \eqref{eq:test_stat} were then computed for each user $m$.  For each threshold $\tau$, it was determined whether there was a detection event (at least one colluder with $T_z(m) > \tau$) and/or a false alarm (at least one innocent user with $T_z(m) > \tau$).  In total, 50,000 attacks were simulated, and the detection and false alarm counts were averaged.  Then the minimal $\tau$ value was selected for which $\mathrm{P}_{\mathrm{fa}} \leq 10^{-3}$.  This induced the corresponding $\mathrm{P}_{\mathrm{d}}$.

\subsection{Results}

We ran experiments with four different dimension sizes $N \in \{195, 651, 2667, 8128\}$.    The noise level was kept at $\mathrm{WNR} = 0 \text{ dB}$.  The value of $K$ varied between $1$ and sufficiently large values so  $\Pd$ approached zero.  Plots for the probability of detection $\Pd$ as a function of the size of the coalition $K$ are shown in Figures~\ref{fig:sim:N_195}--\ref{fig:sim:N_8128} for $N \in \{195, 651, 2667, 8128\}$ respectively.  The largest values of $K$ for which at least one attacker can be caught with probability (nearly) one under the $\Pfa$ constraint are about 2, 5, 11, and 21 respectively.  Overall, ETF fingerprints perform comparably to orthogonal and simplex fingerprints while accommodating many more users.

\ifCLASSOPTIONcaptionsoff
  \newpage
\fi



\bibliographystyle{IEEEtran}



%





\end{document}